\DeclareMathOperator*{\argminA}{arg\,min}
\def\N{\mathbb N}
\def\R{\mathbb R}
\def\P{\mathbb P}
\newcommand{\diag}{\operatorname{diag}\,}
\def\tr{{\rm tr\,}}
\def\Pn{\mathbb{P}_n}
\def\Hn{\mathbb{H}_n}
\numberwithin{equation}{section}
\newtheorem{theorem}{Theorem}[section]
\newtheorem{lemma}[theorem]{Lemma}
\theoremstyle{definition}
\begin{document}
%
% paper title
% Titles are generally capitalized except for words such as a, an, and, as,
% at, but, by, for, in, nor, of, on, or, the, to and up, which are usually
% not capitalized unless they are the first or last word of the title.
% Linebreaks \\ can be used within to get better formatting as desired.
% Do not put math or special symbols in the title.
\title{Barycenter Estimation of Positive Semi-Definite Matrices with Bures-Wasserstein Distance}
%
%
% author names and IEEE memberships
% note positions of commas and nonbreaking spaces ( ~ ) LaTeX will not break
% a structure at a ~ so this keeps an author's name from being broken across
% two lines.
% use \thanks{} to gain access to the first footnote area
% a separate \thanks must be used for each paragraph as LaTeX2e's \thanks
% was not built to handle multiple paragraphs
%

\author{Jingyi~Zheng,~\IEEEmembership{}
        Huajun~Huang,~\IEEEmembership{}
        Yuyan~Yi,~\IEEEmembership{}
        Yuexin~Li,~\IEEEmembership{}
        and~Shu-Chin Lin,~\IEEEmembership{}
\thanks{J. Zheng, H. Huang, Y. Yi, and Y. Li are with the Department of Mathematics and Statistics, Auburn University, Auburn, AL, 36849, USA. S. Lin is with Center for Neuropsychiatric Research, National Health Research Institutes, Taiwan. correspondence e-mail: jingyi.zheng@auburn.edu.}}% <-this % stops a space
% Center for Neuropsychiatric Research, National Health Research Institutes, Taiwan。 scllin@nhri.edu.tw

%\thanks{J. Doe and J. Doe are with Anonymous University.}% <-this % stops a space
%\thanks{Manuscript received February 23, 2023; revised August 26, 2023.}}

% note the % following the last \IEEEmembership and also \thanks -
% these prevent an unwanted space from occurring between the last author name
% and the end of the author line. i.e., if you had this:
%
% \author{....lastname \thanks{...} \thanks{...} }
%                     ^------------^------------^----Do not want these spaces!
%
% a space would be appended to the last name and could cause every name on that
% line to be shifted left slightly. This is one of those "LaTeX things". For
% instance, "\textbf{A} \textbf{B}" will typeset as "A B" not "AB". To get
% "AB" then you have to do: "\textbf{A}\textbf{B}"
% \thanks is no different in this regard, so shield the last } of each \thanks
% that ends a line with a % and do not let a space in before the next \thanks.
% Spaces after \IEEEmembership other than the last one are OK (and needed) as
% you are supposed to have spaces between the names. For what it is worth,
% this is a minor point as most people would not even notice if the said evil
% space somehow managed to creep in.

% The paper headers
\markboth{Journal of \LaTeX\ Class Files,~Vol.~14, No.~8, August~2015}
{Zheng \MakeLowercase{\textit{et al.}}: Bures-Wasserstein distance}
% The only time the second header will appear is for the odd numbered pages
% after the title page when using the two side option.
%
% *** Note that you probably will NOT want to include the author's ***
% *** name in the headers of peer review papers.                   ***
% You can use \ifCLASSOPTIONpeerreview for conditional compilation here if
% you desire.

% If you want to put a publisher's ID mark on the page you can do it like
% this:
%\IEEEpubid{0000--0000/00\$00.00~\copyright~2015 IEEE}
% Remember, if you use this you must call \IEEEpubidadjcol in the second
% column for its text to clear the IEEEpubid mark.

% make the title area
\maketitle

% As a general rule, do not put math, special symbols or citations
% in the abstract or keywords.
\begin{abstract}
Brain-computer interface (BCI) builds a bridge between human brain and external devices by recording brain signals and translating them into commands for devices to perform the user's imagined action. The core of the BCI system is the classifier that labels the input signals as the user's imagined action. The classifiers that directly classify covariance matrices using Riemannian geometry are widely used not only in BCI domain but also in a variety of fields including neuroscience, remote sensing, biomedical imaging, etc. However, the existing Affine-Invariant Riemannian-based methods treat covariance matrices as positive definite while they are indeed positive semi-definite especially for high dimensional data. Besides, the Affine-Invariant Riemannian-based barycenter estimation algorithms become time consuming, not robust, and have convergence issues when the dimension and number of covariance matrices become large. To address these challenges, in this paper, we establish the mathematical foundation for Bures-Wasserstein distance and propose new algorithms to estimate the barycenter of positive semi-definite matrices efficiently and robustly. Both theoretical and computational aspects of Bures-Wasserstein distance and barycenter estimation algorithms are discussed. With extensive simulations, we comprehensively investigate the accuracy, efficiency, and robustness of the barycenter estimation algorithms coupled with Bures-Wasserstein distance. The results show that Bures-Wasserstein based barycenter estimation algorithms are more efficient and robust.

\end{abstract}

% Note that keywords are not normally used for peerreview papers.
\begin{IEEEkeywords}
Brain-computer interface (BCI), Riemannian manifold, Affine-Invariant distance, Bures-Wasserstein distance, Fr\'echet Mean.
\end{IEEEkeywords}

% For peer review papers, you can put extra information on the cover
% page as needed:
% \ifCLASSOPTIONpeerreview
% \begin{center} \bfseries EDICS Category: 3-BBND \end{center}
% \fi
%
% For peerreview papers, this IEEEtran command inserts a page break and
% creates the second title. It will be ignored for other modes.
\IEEEpeerreviewmaketitle

\section{Introduction}
% The very first letter is a 2 line initial drop letter followed
% by the rest of the first word in caps.
%
% form to use if the first word consists of a single letter:
% \IEEEPARstart{A}{demo} file is ....
%
% form to use if you need the single drop letter followed by
% normal text (unknown if ever used by the IEEE):
% \IEEEPARstart{A}{}demo file is ....
%
% Some journals put the first two words in caps:
% \IEEEPARstart{T}{his demo} file is ....
%
% Here we have the typical use of a "T" for an initial drop letter
% and "HIS" in caps to complete the first word.

% challenges
%While numerous research have been conducted to decode the brain signals for EEG-based BCIs, challenges still exist, which prevent BCIs from being widely used outside laboratories. First, EEG signals, which are recorded noninvasively, are easily contaminated by noise and artifacts such as muscle effects. Also, they are non-stationary and non-linear signals with evolving neural oscillations. Secondly, brain oscillations are subject to inter-subjects and intra-subjects differences. Due to these differences, users usually need to go through a tedious and repeating training process to collect enough data to calibrate the classifiers for different users under different contexts. Lastly, the data collected during BCI calibration and training process is easily contaminated by outliers (e.g. distractions to users). Overall, BCIs suffer from low robustness and reliability due to sensitivity to noise, outliers, and non-stationarity of the EEG signals.

\IEEEPARstart{B}{rain} computer interface (BCI) builds a bridge between human brain and external devices by translating the brain signals into instructions for the external devices to perform the user's imagined actions. The core of BCI system is the classifier that classify the brain signals into one of the commands for the external devices. The brain signals can be captured by multi-channel Electroencephalography (EEG) \cite{schirrmeister2017deep, zheng2021time}, functional magnetic resonance imaging (fMRI) \cite{qiu2015manifold, varoquaux2010detection}, and other neuroimaging techniques, which leads to BCI data being mostly spatial and temporal. To capture the spatial and temporal pattern of BCI data, covariance matrices are widely used, and the BCI classifiers are trained by directly classifying covariance matrices using Riemannian geometry \cite{barachant2010riemannian, barachant2013classification, miah2019eeg}.
Besides BCI, covariance matrices are also widely used to capture the structure of complex data in various fields such as
computer vision \cite{chen2020covariance, porikli2006covariance, sivalingam2010tensor},
natural language processing \cite{jagarlamudi2011improving, zhang2014discriminatively},
domain adaption \cite{cui2014flowing, zhang2018aligning},
remote sensing \cite{he2018remote, eklundh1993comparative},
biomedical imaging \cite{qiu2015manifold, varoquaux2010detection},
and many others \cite{yang2013solar, meyer2009factor, huang2015face}.

Previous works have shown that covariance matrices are often treated as positive definite matrices, and analyzed on the manifold of positive definite matrices, denoted as \(\Pn\), using Affine-Invariant Riemannian metric, which has been a popular choice of used Riemannian metric due to its many great mathematical properties. Considering \(\mathbb{P}_n\) as an open sub-manifold of \({\mathbb{M}}_n\), at any point $A\in \Pn$ (i.e., any positive definite matrix $A$), the Affine-Invariant Riemannian  metric  on the tangent space $T_A\Pn$ is defined as
\begin{equation}\label{innerR}
    \langle X,Y \rangle_A= \tr (A^{-1} X A^{-1} Y).
\end{equation}
The corresponding distance function for two positive definite matrices $A$ and $B$, named Affine-Invariant Riemannian  (AI) distance, is
 \begin{equation}\label{disR}
    d_{AI}(A,B)=\big{(}\sum_{i=1}^{n}\log^2\lambda_i(A^{-1}B)\big{)}^{1/2},
\end{equation}
where $\lambda_i(A^{-1}B), i=1,\dots,n$ is the eigenvalues of $A^{-1}B$.
With respect to the metric (\ref{innerR}), any two points on $\Pn$ (i.e., any two positive definite matrices) is joined by the following geodesic:
\begin{equation}\label{JeodR}
    \gamma_{AI}(t)=A^{1/2}(A^{-1/2}BA^{-1/2})^t A^{1/2}, \ \  t\in [0,1].
\end{equation}
The geometric mean of the two matrices, denoted as $A\#B$, is the mid point of the geodesic:
\begin{equation}\label{geom}
    \gamma_{AI}(1/2)=A^{1/2}(A^{-1/2}BA^{-1/2})^{1/2} A^{1/2}.
\end{equation}
It can be easily shown that $d_{AI}(A,\gamma_{AI}(t))=td_{AI}(A,B)$. Therefore, we have $d_{AI}(A,A\#B)=\frac{1}{2}d_{AI}(A,B)$.

When analyzing a set of matrices, one of the most important measure is the barycenter of matrices. Like the arithmetic mean of a set of numbers, the barycenter is a measure of the central tendency of a set of matrices on the manifold. It is also used when calculating the standard deviation and variance of a set of matrices. Most importantly, it is the foundation for developing classification models for matrices. Using AI distance, statistical and computational methods have been developed \cite{barachant2010riemannian, barachant2013classification, miah2019eeg, arsigny2007geometric, jayasumana2013kernel, huang2015log, lin2019riemannian, pennec2020manifold} to analyze covariance matrices including the barycenter estimation and classification of matrices.

However, computing AI distance involves matrix inverse and eigenvalue decomposition as shown in (\ref{disR}), which are time-consuming and computationally unstable especially for large matrices. Furthermore, instead of being strictly positive definite, a covariance matrix is in fact a positive semi-definite (PSD) matrix, especially for high dimensional data. As the dimension of covariance matrix increases, the likelihood of having zero eigenvalues grows substantially, which invalidates the use of AI distance. Therefore, we propose to analyze covariance matrices directly on the manifold of PSD matrices coupled with Bures-Wasserstein distance.

In this paper, we first establish the mathematical foundation for the Bures-Wasserstein (BW) distance by proving its properties and studying the retractions maps of the manifold of PSD matrices. To estimate the central tendency of a set of matrices, we then propose three algorithms to estimate the Fr\'echet mean (i.e,  barycenter) of matrices using BW distance. Extensive simulations are conducted to comprehensively investigate the efficiency and robustness of BW distance, as well as the accuracy, efficiency, and robustness of the proposed barycenter estimation algorithms. The remaining of the paper is organized as follows.
In Section~\ref{methods} (A), we establish the mathematical properties of BW distance and the retraction maps of the manifold. In Section~\ref{methods} (B), we propose three algorithms for estimating the barycenter of matrices on the manifold. In Section~\ref{result}, we present the simulation results and discuss the efficiency and robustness of BW distance and the proposed algorithms for barycenter estimation, and a full comparison with the widely used AI distance. In Section~\ref{conclusion}, we summarize our contributions and conclude the paper.

\section{Methodology} \label{methods}

By viewing $\Pn$ as the quotient manifold \({\mathbb{P}}_n=\frac{\mathrm{GL(n)}}{\mathrm{U(n)}}\), where \(\mathrm{GL(n)}\) is the set of nonsingular matrices, \cite{bhatia2018on} proposed the Bures-Wasserstein (BW) distance
\begin{eqnarray} \notag
d_{BW}(A,B) &=& \left[ \tr(A+B)-2\tr(A^{1/2}BA^{1/2})^{1/2} \right]^{1/2} \\
&=& \left[ \tr(A+B)-2\tr(AB)^{1/2} \right]^{1/2} \notag
\end{eqnarray}
where $A,B \in \Pn$.
Notice that the above BW distance $d_{BW}(A,B)$ coincides with the Wasserstein distance between two Gaussian distribution with the same mean and covariance matrices being $A$ and $B$, respectively.

The geodesic from $A$ to $B$ is defined as $\gamma(t): [0,1]\to \Pn$
\begin{equation}
\gamma(t)=(1-t)^2A+t^2 B+t(1-t)[(AB)^{1/2}+(BA)^{1/2}] \notag
\end{equation}

Indeed, the above BW distance and the geodesic can be extended to
the set $\overline{\P}_n$ of $n\times n$ positive semi-definite (PSD) matrices by viewing \(\overline{\P}_n=\frac{\mathrm{M(n)}}{\mathrm{U(n)}}\)
since \(\overline{\mathrm{GL(n)}}=\mathrm{M(n)}\). Therefore, for any two PSD matrices $A$ and $B$, the BW distance is
\begin{equation} \label{bw-dist}
d_{BW}(A,B) = \left[ \tr(A+B)-2\tr(AB)^{1/2} \right]^{1/2}
\end{equation}
where $A,B \in \overline{\P}_n$. And the geodesic from $A$ to $B$, denoted as $A\diamond_t B$, is
\begin{equation} \label{gamma}
    A\diamond_t B:=\gamma(t)=(1-t)^2A+t^2 B+t(1-t)[(AB)^{1/2}+(BA)^{1/2}]
\end{equation}
Same as $A\#B$, $A\diamond_{1/2} B$ is called Wasserstein mean and denoted as $A \diamond B$
\begin{equation}\label{Wmean}
    A\diamond_{1/2} B=\frac{1}{4}[A+B+(AB)^{1/2}+(BA)^{1/2}].
\end{equation}

\subsection{Establish the mathematical foundation for BW distance}

Few studies have discussed the mathematical properties of BW distance \cite{bhatia2018on, bhatia2019inequalities, THANWERDAS2023163, hwang2022two, thanwerdas2022riemannian, kim2020inequalities, hwang2019bounds,  massart2020quotient, malago2018wassersteingd}, especially on $\overline{\P}_n$.
In the following, we study some mathematical properties of BW distance and the retraction maps of the manifold $\overline{\P}_n$. These properties  provide nice insights and intuitions about BW metric and BW mean.

Given $X\in \mathrm{M}_n$, let $|X|=(X^*X)^{1/2}$ denote the PSD part in the polar decomposition of $X$.

\begin{theorem}
For $A, B\in\overline{\P}_n$ and $t\in\R$,
\begin{equation}\label{BW geodesic form}
A\diamond_t B =B\diamond_{1-t}A = |(1-t)A^{1/2}+t U^* B^{1/2}|^2.
\end{equation}
in which $U$ is a certain unitary matrix occurring in a polar decomposition of $B^{1/2}A^{1/2}$:
\begin{equation}\label{polar BA half}
B^{1/2}A^{1/2}=U|B^{1/2}A^{1/2}|=
U(A^{1/2}BA^{1/2})^{1/2},
\end{equation}
or equivalently, $A^{1/2}B^{1/2}=U^*|A^{1/2}B^{1/2}|$.  Moreover,
when  $(1-t)A+t|B^{1/2}A^{1/2}|$ is PSD (e.g. when $t\in [0,1]$),
\begin{equation}\label{BW prop 1}
|(A\diamond_t B)^{1/2}A^{1/2}|=
(1-t)A+t|B^{1/2}A^{1/2}|.
\end{equation}
\end{theorem}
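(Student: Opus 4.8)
The plan is to reduce everything to a single polar decomposition. Write $P:=A^{1/2}$, $Q:=B^{1/2}$ (both Hermitian), $M:=|B^{1/2}A^{1/2}|=(A^{1/2}BA^{1/2})^{1/2}$, and fix the polar decomposition $QP=UM$ from \eqref{polar BA half}, so that $U$ is unitary and, taking adjoints, $PQ=MU^*$. The identity $A\diamond_t B=B\diamond_{1-t}A$ needs no work: the defining expression \eqref{gamma} is literally invariant under the simultaneous exchange $A\leftrightarrow B$, $t\leftrightarrow 1-t$, because both $(1-t)^2A+t^2B$ and the bracket $(AB)^{1/2}+(BA)^{1/2}$ are symmetric under it. So the content is the second equality in \eqref{BW geodesic form} together with \eqref{BW prop 1}.

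For the equality $A\diamond_t B=|(1-t)A^{1/2}+tU^*B^{1/2}|^2$ I would expand the square. Setting $X:=(1-t)P+tU^*Q$ and using $(U^*Q)^*=QU$ together with $UU^*=I$, one gets $X^*X=(1-t)^2A+t^2B+t(1-t)(PU^*Q+QUP)$, the $t^2$ term coming from $QUU^*Q=B$. It then remains to match the cross term with \eqref{gamma}, i.e. to prove $PU^*Q=(AB)^{1/2}$ and $QUP=(BA)^{1/2}$. The key computation is that these are square roots: $(PU^*Q)^2=PU^*(QP)U^*Q=PU^*(UM)U^*Q=P(MU^*)Q=P(PQ)Q=AB$, and symmetrically $(QUP)^2=BA$; moreover $QUP=(PU^*Q)^*$, so their sum is Hermitian, consistent with \eqref{gamma}. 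To identify $PU^*Q$ with the \emph{principal} root, note that when $A$ is invertible $U^*Q=MP^{-1}$ follows from $U^*QP=M$, so $PU^*Q=PMP^{-1}=A^{1/2}MA^{-1/2}$ is similar to the PSD matrix $M$ and hence has nonnegative spectrum.

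For \eqref{BW prop 1}, by the equality just proved $A\diamond_t B=X^*X$, hence $(A\diamond_t B)^{1/2}=|X|$. Take a polar decomposition $X=W\,|X|=W(A\diamond_t B)^{1/2}$ with $W$ unitary; then $(A\diamond_t B)^{1/2}=W^*X$ and $(A\diamond_t B)^{1/2}A^{1/2}=W^*XP=W^*\big((1-t)A+tU^*QP\big)=W^*S$, where $S:=(1-t)A+tM=(1-t)A+t|B^{1/2}A^{1/2}|$ and I used $U^*QP=U^*UM=M$. Since $W$ is unitary and $S$ is Hermitian, $|W^*S|=(S^*WW^*S)^{1/2}=(S^2)^{1/2}$, which equals $S$ exactly when $S$ is PSD — precisely the stated hypothesis (and automatic for $t\in[0,1]$, where $S$ is a nonnegative combination of the PSD matrices $A$ and $M$). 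This gives \eqref{BW prop 1}. Note that this argument, like the expansion above, uses only $U^*U=I$ and $WW^*=I$, so it is insensitive to any non-uniqueness in the factors.

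The main obstacle is the \emph{singular} case in the middle equality: when $A$ or $B$ is not invertible the unitary $U$ is not unique, and $(AB)^{1/2}$ as a genuine principal square root may fail to exist, since $AB$ can carry a nontrivial Jordan block at $0$. I expect to dispatch this by a density/continuity argument — approximate $A,B$ by positive definite $A_\varepsilon,B_\varepsilon$, for which $U$, $M$, and the principal root are all continuous and the identities hold by the invertible computation above, then let $\varepsilon\to0$ — or, equivalently, to read \eqref{gamma} on $\overline{\P}_n$ as \emph{defining} $(AB)^{1/2}:=A^{1/2}U^*B^{1/2}$ and $(BA)^{1/2}:=B^{1/2}UA^{1/2}$, in which case the expansion alone already yields \eqref{BW geodesic form} for any fixed choice of $U$.
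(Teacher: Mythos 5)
Your proposal is correct and follows essentially the same route as the paper: both arguments hinge on identifying $A^{1/2}U^*B^{1/2}=(AB)^{1/2}$ from the polar decomposition \eqref{polar BA half} (you via similarity of $A^{1/2}|B^{1/2}A^{1/2}|A^{-1/2}$ to a PSD matrix, the paper via the same conjugation pushed through the square root), then expanding $|(1-t)A^{1/2}+tU^*B^{1/2}|^2$, and finally obtaining \eqref{BW prop 1} from $U^*B^{1/2}A^{1/2}=|B^{1/2}A^{1/2}|$, with the singular case handled by approximation from $\P_n$. The one refinement the paper adds, which you should fold into your density step, is that the polar factor $U_\varepsilon$ need not converge as $A_\varepsilon\to A$ when the limit is singular, so one extracts the limiting $U$ from a convergent subsequence using compactness of the unitary group --- which suffices because the theorem only asserts the identity for \emph{some} unitary $U$ in a polar decomposition of $B^{1/2}A^{1/2}$.
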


\begin{proof}
If $A^{1/2}$ is nonsingular, i.e. $A\in\P_n$, then for any unitary matrix $U$ in the polar decomposition \eqref{polar BA half}, we have
\begin{subequations}
\begin{eqnarray}
    A^{1/2}U^* B^{1/2} &=& A^{1/2} (U^* B^{1/2}A^{1/2})A^{-1/2}
    \notag\\
    &=& A^{1/2} (A^{1/2}BA^{1/2})^{1/2} A^{-1/2}
   \notag \\ &=& (A^{1/2} A^{1/2}BA^{1/2}A^{-1/2})^{1/2}
    \notag \\ &=& (AB)^{1/2}.
\end{eqnarray}
If $A^{1/2}$ is singular, we may find a sequence of nonsingular positive definite matrices $\{A_i\}_{i=1}^{\infty}\subseteq \P_n$ such that
$\lim_{i\to\infty}A_i=A$. Let $B^{1/2}A_i^{1/2}=U_i|B^{1/2}A_i^{1/2}|$ be the polar decomposition for $i=1,2,3,\ldots$
Since the unitary group of degree $n$  is compact, the sequence $\{U_i\}_{i=1}^{\infty}$
has a convergent subsequence $\{U_{i_t}\}_{t=1}^{\infty}$. Let  $U:=\lim_{t\to\infty} U_{i_t}$.
Then
\begin{eqnarray}
\notag
A^{1/2}U^*B^{1/2}
    &=&\lim_{t\to\infty} A_{i_t}^{1/2}U_{i_t}^*B^{1/2}
\\ &=& \lim_{t\to\infty} (A_{i_t}B)^{1/2}=(AB)^{1/2}.
\qquad
\end{eqnarray}
\end{subequations}
In both cases,
\begin{equation}
    B^{1/2}UA^{1/2} = (A^{1/2}U^* B^{1/2})^*
= (BA)^{1/2}.
\end{equation}
So the geodesic from $A$ to $B$ can be expressed as:
\begin{eqnarray}
 A\diamond_t B
 &=& B\diamond_{1-t}A
\notag\\
&=& (1-t)^2A+t^2 B+t(1-t)[(AB)^{1/2}+(BA)^{1/2}]
\notag\\
&=& |(1-t)A^{1/2}+t U^* B^{1/2}|^2.
\notag\end{eqnarray}
Moreover,
\begin{eqnarray}
 |(A\diamond_t B)^{1/2}A^{1/2}|
&= & |\, |(1-t)A^{1/2}+t U^* B^{1/2}|A^{1/2}|
\notag \\
&=& |\, [(1-t)A^{1/2}+t U^* B^{1/2}]A^{1/2}|
\notag \\
&=& |(1-t)A+t|B^{1/2}A^{1/2}|\, |.
\label{BW prop 2}
\end{eqnarray}
When  $(1-t)A+t|B^{1/2}A^{1/2}|$ is PSD, we have
\begin{equation}  \notag
|(A\diamond_t B)^{1/2}A^{1/2}|
=(1-t)A+t|B^{1/2}A^{1/2}|.
\end{equation}
Therefore, the theorem is proved.
\qedhere
\end{proof}

%%%%%%%%  Plan B
\iffalse
 By \eqref{polar BA half}, we have $A^{1/2} B^{1/2}=(B^{1/2}A^{1/2})^*=|B^{1/2}A^{1/2}|U^*$, so that
\begin{eqnarray}
   (A^{1/2}U^* B^{1/2})^2
   &=& A^{1/2} (U^* B^{1/2}A^{1/2})U^* B^{1/2}
\notag\\ &=& A^{1/2} |B^{1/2}A^{1/2}|U^*B^{1/2}
 \notag\\ &=& A^{1/2} (A^{1/2}B^{1/2})B^{1/2}
 \notag\\ &=& AB.
\end{eqnarray}
If $A^{1/2}$ is nonsingular, i.e. $A\in\P_n$, then
\begin{eqnarray*}
    A^{1/2}U^*B^{1/2}
    &=& A^{1/2} (U^*B^{1/2}A^{1/2}) A^{-1/2}
    \\
    &\sim& U^*B^{1/2}A^{1/2}=|B^{1/2}A^{1/2}|
\end{eqnarray*}
is diagonalizable, so that $A^{1/2}U^*B^{1/2}=(AB)^{1/2}.$
\fi
\iffalse
By Theorem 3.2.11.1 in \cite{HornMatrix} , the nonzero eigenvalues and the corresponding Jordan block structures of $A^{1/2}U^* B^{1/2}$ and
$U^* B^{1/2}A^{1/2}=|B^{1/2}A^{1/2}|$ are the same. Moreover, by Theorem 7.6.1(b) in \cite{HornMatrix}, $AB$ is diagonalizable with nonnegative eigenvalues. Therefore,
\begin{eqnarray}
    A^{1/2}U^* B^{1/2} &=& (AB)^{1/2},
\\ B^{1/2}UA^{1/2} &=& (A^{1/2}U^* B^{1/2})^*
= (BA)^{1/2}.
\end{eqnarray}
\fi
%%%%%%%%

We remark that \eqref{BW prop 2} still holds even when the Hermitian matrix $(1-t)A+t|B^{1/2}A^{1/2}|$ is not PSD.
When both $A$ and $B$ are singular, some unitary matrix $U$ that satisfies \eqref{polar BA half}
may not satisfy \eqref{BW geodesic form}.

\begin{theorem} Let $A, B\in \overline{\P}_n$.
\begin{enumerate}
\item
If $r,t\in\R$ and $(1-t)A+t|B^{1/2}A^{1/2}|\in \overline{\P}_n$,
then
\begin{equation} \label{BW prop 3}
A\diamond_r(A\diamond_t B) = A\diamond_{rt}B.
\end{equation}
\item If  $r,s,t\in\R$ satisfy that $(1-x)A+x|B^{1/2}A^{1/2}|\in \overline{\P}_n$ for $x\in\{s,t\}$,
then
\begin{equation} \label{BW prop 4}
(A\diamond_s B)\diamond_r (A\diamond_t B)=A\diamond_{(1-r)s+rt}B.
\end{equation}
\end{enumerate}
\end{theorem}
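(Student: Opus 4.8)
The plan is to lift both identities to the level of matrix representatives, where the whole geodesic is the modulus-squared of an affine line, and then read off the reparametrizations from that affineness. Set $P=A^{1/2}$, $Q=U^{*}B^{1/2}$ with $U$ as in \eqref{polar BA half}, and $N(x)=(1-x)P+xQ$, so that \eqref{BW geodesic form} reads $A\diamond_x B=|N(x)|^{2}$ for all $x\in\R$, with $N$ affine in $x$. The computation will rest on the products furnished by the polar structure: $PP^{*}=A$, $QQ^{*}=U^{*}BU$, and $PQ^{*}=QP^{*}=H$, where $H:=|B^{1/2}A^{1/2}|$ (indeed $QP^{*}=U^{*}B^{1/2}A^{1/2}=H$ and $PQ^{*}=(QP^{*})^{*}=H$). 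Expanding, $N(t)N(s)^{*}=(1-s)(1-t)A+(s+t-2st)H+st\,U^{*}BU$, which is Hermitian and symmetric in $(s,t)$. I also record $G_x:=(1-x)A+xH=N(x)A^{1/2}=A^{1/2}N(x)^{*}$, so that the hypothesis of each of (1) and (2) is exactly $G_s,G_t\in\overline{\P}_n$.

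I will prove \eqref{BW prop 3} as the case $s=0$. Writing $C:=A\diamond_t B=|N(t)|^{2}$ and taking a polar decomposition $N(t)=W\,C^{1/2}$, one gets $C^{1/2}A^{1/2}=W^{*}N(t)A^{1/2}=W^{*}G_t$; since $G_t\in\overline{\P}_n$ by hypothesis, this is a polar decomposition of $C^{1/2}A^{1/2}$, so the unitary in \eqref{BW geodesic form} applied to the pair $(A,C)$ may be taken to be $W^{*}$. Then $W\,C^{1/2}=N(t)$ enters the formula and, by affineness of $N$ with $A^{1/2}=N(0)$, one finds $A\diamond_r C=|(1-r)N(0)+rN(t)|^{2}=|N(rt)|^{2}=A\diamond_{rt}B$.

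For \eqref{BW prop 4} the same mechanism is applied to $C:=A\diamond_s B$ and $D:=A\diamond_t B$, with polar decompositions $N(s)=W_s C^{1/2}$ and $N(t)=W_t D^{1/2}$ (so $C^{1/2}=W_s^{*}N(s)=N(s)^{*}W_s$ and $D^{1/2}=W_t^{*}N(t)$). The only extra ingredient is the alignment $N(t)N(s)^{*}\in\overline{\P}_n$. Granting it, $D^{1/2}C^{1/2}=W_t^{*}\,N(t)N(s)^{*}\,W_s=(W_t^{*}W_s)\bigl(W_s^{*}N(t)N(s)^{*}W_s\bigr)$ exhibits the unitary factor $W_t^{*}W_s$ against a positive semi-definite second factor, so the unitary in \eqref{BW geodesic form} for $(C,D)$ may be taken as $W_t^{*}W_s$. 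Its conjugate sends $D^{1/2}$ to $W_s^{*}N(t)$ while $C^{1/2}=W_s^{*}N(s)$, so the formula collapses to $|W_s^{*}((1-r)N(s)+rN(t))|^{2}=|N((1-r)s+rt)|^{2}=A\diamond_{(1-r)s+rt}B$.

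Everything therefore reduces to the single nontrivial point: the two scalar hypotheses $G_s,G_t\in\overline{\P}_n$ force the alignment $N(t)N(s)^{*}\in\overline{\P}_n$. Assuming $A$ nonsingular, $N(x)=G_x A^{-1/2}$ gives $N(t)N(s)^{*}=G_t A^{-1}G_s$, and congruence by $A^{-1/2}$ turns this into $\widehat G_t\widehat G_s$ with $\widehat G_x:=A^{-1/2}G_xA^{-1/2}\in\overline{\P}_n$. Since $N(t)N(s)^{*}$ is Hermitian so is $\widehat G_t\widehat G_s$; but a product of two positive semi-definite matrices that is Hermitian must commute and hence be positive semi-definite, so $\widehat G_t\widehat G_s\in\overline{\P}_n$ and congruence back gives $N(t)N(s)^{*}\in\overline{\P}_n$. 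I expect this step — seeing that the hypotheses merge, through $N(t)N(s)^{*}=A^{1/2}\widehat G_t\widehat G_s A^{1/2}$, into a single positivity of a Hermitian product of PSD matrices — to be the crux. The remaining degeneracies (singular $A$, or singular $C,D$ making the polar factors non-unique) are handled by approximating $A$ with positive definite matrices and passing to subsequential limits of the unitaries involved, exactly as in the proof of the previous theorem.
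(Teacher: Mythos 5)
Your proof is correct, but it takes a genuinely different route from the paper's. The paper proves \eqref{BW prop 3} by computing the modulus $|[A\diamond_r(A\diamond_t B)]^{1/2}A^{1/2}|$ via \eqref{BW prop 2} and \eqref{BW prop 1}, matching it with $|(A\diamond_{rt}B)^{1/2}A^{1/2}|$, and cancelling the nonsingular factor $A^{1/2}$; it then deduces \eqref{BW prop 4} from \eqref{BW prop 3} purely algebraically, via the chain $(A\diamond_s B)\diamond_r(A\diamond_t B)=(A\diamond_s B)\diamond_r[(A\diamond_s B)\diamond_{1-t/s}A]=\cdots$, using only the symmetry $X\diamond_r Y=Y\diamond_{1-r}X$. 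You instead work upstairs in the quotient picture: both identities become the statement that $x\mapsto N(x)$ is affine, and all the content is pushed into aligning the horizontal lifts, i.e.\ showing $N(t)N(s)^{*}\in\overline{\P}_n$. That alignment lemma (a Hermitian product of two PSD matrices is PSD, applied to $\widehat G_t\widehat G_s$) is the genuinely new ingredient your route requires and the paper's route avoids; in exchange you get \eqref{BW prop 4} in one stroke, without the $t\le s$ reduction or the intermediate reparametrizations, and the geometric mechanism is more transparent. One point deserves more than the sentence you give it: by the paper's own remark after its first theorem, when $C=A\diamond_s B$ is singular not every polar unitary of $D^{1/2}C^{1/2}$ is admissible in \eqref{BW geodesic form}, so ``may be taken as $W_t^{*}W_s$'' is not automatic from the polar factorization alone. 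It does hold here, because $C^{1/2}(W_t^{*}W_s)^{*}D^{1/2}=N(s)^{*}N(t)$, which squares to $CD$ (using $N(t)N(s)^{*}=N(s)N(t)^{*}$) and is conjugate to the PSD matrix $N(t)N(s)^{*}$ when $N(s)$ is invertible, hence equals $(CD)^{1/2}$ as required by \eqref{gamma}; this is exactly the verification your alignment lemma enables, and you should record it explicitly rather than fold it into the limiting argument.
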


\begin{proof}
\eqref{BW prop 3} is a special case of \eqref{BW prop 4} by choosing $s=0$. We will
show that \eqref{BW prop 3} also implies \eqref{BW prop 4} after proving  \eqref{BW prop 3}.

Suppose $r,t\in\R$ such that $(1-t)A+t|B^{1/2}A^{1/2}|\in \overline{\P}_n$. Assume that $A\in\P_n$ first (the case of singular $A$ can be done by continuous extension). By \eqref{BW prop 2} and \eqref{BW prop 1},
\begin{eqnarray}
&&|[A\diamond_r(A\diamond_t B)]^{1/2}A^{1/2}|
 \notag\\
&=& |(1-r)A+r|(A\diamond_t B)^{1/2}A^{1/2}|\, |
 \notag\\
&=& |(1-r)A+r[(1-t)A+t|B^{1/2}A^{1/2}|\, ]\, |
 \notag\\
&=& |(1-rt)A+rt|B^{1/2}A^{1/2}|\, |
 \notag\\
&=&
|(A\diamond_{rt} B)^{1/2}A^{1/2}|.
\end{eqnarray}
Hence there is a unitary matrix $V$ such that
$$[A\diamond_r(A\diamond_t B)]^{1/2}A^{1/2}=V(A\diamond_{rt} B)^{1/2}A^{1/2}.$$
By assumption, $A$ is nonsingular, so that
 $[A\diamond_r(A\diamond_t B)]^{1/2}=V(A\diamond_{rt} B)^{1/2}$. We get
\eqref{BW prop 3}.

Now suppose $r,s,t\in\R$ satisfy that $(1-x)A+x|B^{1/2}A^{1/2}|\in \overline{\P}_n$ for $x\in\{s,t\}$.
Again we assume that $A\in\P_n$ first and then extend the result continuously to singular $A$. Suppose $t\le s$ without loss of generality. Then by \eqref{BW prop 3},
%$$A\diamond_t B=A\diamond_{\frac{t}{s}} (A\diamond_s B)= (A\diamond_s B)\diamond_{\frac{s-t}{s}} A$$
\begin{eqnarray} \notag
&&(A\diamond_s B)\diamond_r (A\diamond_t B)
 \notag\\
&=& (A\diamond_s B)\diamond_r[A\diamond_{\frac{t}{s}} (A\diamond_s B)]
 \notag\\
&=& (A\diamond_s B) \diamond_r[ (A\diamond_s B) \diamond_{1-\frac{t}{s}} A]
 \notag\\
&=& (A\diamond_s B) \diamond_{\frac{rs-rt}{s}} A
 \notag\\
&=&  A\diamond_{1-\frac{rs-rt}{s}} (A\diamond_s B)
 \notag\\
 \notag&=& A\diamond_{(1-r)s+rt} B.
\end{eqnarray}
We get \eqref{BW prop 4}.
\end{proof}

Let $U$ be  the unitary matrix in the  polar decomposition of
$B^{1/2}A^{1/2}$ as in \eqref{polar BA half}.
Bhatia, Jain, and Lim showed that \cite[Theorem 1]{bhatia2018on}:
\begin{equation}
d_{BW}(A,B)=\|A^{1/2}-B^{1/2}U\|_{F}
\end{equation}
It implies the following property about BW distance.

\begin{theorem}
For $A, B\in\overline{\P}_n$ and $t\in\R$, if $(1-t)A+t|B^{1/2}A^{1/2}|$ is PSD (e.g. when $t\in [0,1]$), then
\begin{equation}\label{BW prop 5}
d_{BW}(A,A\diamond_t B)
=|t|d_{BW}(A,B).
\end{equation}
\end{theorem}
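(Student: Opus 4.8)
The plan is to avoid any delicate manipulation of the geodesic endpoint and instead reduce both sides to the three scalar invariants $\tr A$, $\tr B$, and $\tr(AB)^{1/2}$, which already determine $d_{BW}(A,B)^2 = \tr A + \tr B - 2\tr(AB)^{1/2}$ by \eqref{bw-dist}. Writing $C := A\diamond_t B$, I would start from the definition
$$d_{BW}(A,C)^2 = \tr A + \tr C - 2\,\tr(AC)^{1/2},$$
so that the whole problem comes down to computing the two traces $\tr C$ and $\tr(AC)^{1/2}$ and checking that the result collapses to $t^2\,d_{BW}(A,B)^2$.

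The first trace is routine: expanding $C = A\diamond_t B$ with the geodesic formula \eqref{gamma} and using $\tr(AB)^{1/2} = \tr(BA)^{1/2}$ gives $\tr C = (1-t)^2\tr A + t^2\tr B + 2t(1-t)\tr(AB)^{1/2}$. The crux is the second trace, and this is where I expect the real work — and where the PSD hypothesis is essential. I would first record the purely notational identity $\tr(AC)^{1/2} = \tr(A^{1/2}CA^{1/2})^{1/2} = \tr\,|C^{1/2}A^{1/2}|$, valid because $A^{1/2}CA^{1/2} = (C^{1/2}A^{1/2})^{*}(C^{1/2}A^{1/2})$ and the two products $AC$ and $A^{1/2}CA^{1/2}$ share the same nonzero eigenvalues (this is exactly the convention behind \eqref{bw-dist}). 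Now the payoff: under the standing assumption that $(1-t)A + t\,|B^{1/2}A^{1/2}|$ is PSD, equation \eqref{BW prop 1} identifies $|C^{1/2}A^{1/2}| = |(A\diamond_t B)^{1/2}A^{1/2}|$ with exactly $(1-t)A + t\,|B^{1/2}A^{1/2}|$. Taking traces and using $\tr\,|B^{1/2}A^{1/2}| = \tr(A^{1/2}BA^{1/2})^{1/2} = \tr(AB)^{1/2}$ then yields the clean formula $\tr(AC)^{1/2} = (1-t)\tr A + t\,\tr(AB)^{1/2}$.

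With both traces in hand, the conclusion is a one-line bookkeeping check. Substituting into $d_{BW}(A,C)^2 = \tr A + \tr C - 2\tr(AC)^{1/2}$, the coefficient of $\tr A$ becomes $1 + (1-t)^2 - 2(1-t) = t^2$, the coefficient of $\tr B$ is $t^2$, and the coefficient of $\tr(AB)^{1/2}$ is $2t(1-t) - 2t = -2t^2$. Hence $d_{BW}(A,C)^2 = t^2\big[\tr A + \tr B - 2\tr(AB)^{1/2}\big] = t^2\,d_{BW}(A,B)^2$, and taking the nonnegative square root gives $d_{BW}(A,A\diamond_t B) = |t|\,d_{BW}(A,B)$. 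The only place the hypothesis truly enters is in dropping the outer absolute value in \eqref{BW prop 1}: without the PSD assumption one would be left with the trace of $|(1-t)A + t\,|B^{1/2}A^{1/2}|\,|$, which for a non-PSD Hermitian matrix exceeds the trace of the matrix itself, so I regard that single identification as the genuine obstacle while everything else is linear algebra.
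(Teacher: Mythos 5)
Your proof is correct, but it takes a genuinely different route from the paper's. The paper invokes the Bhatia--Jain--Lim formula $d_{BW}(A,B)=\|A^{1/2}-B^{1/2}U\|_F$ and proves the stronger \emph{matrix} identity $(A\diamond_t B)^{1/2}V=(1-t)A^{1/2}+tB^{1/2}U$ (with $U,V$ the unitaries from the relevant polar decompositions), from which the distance identity drops out; it treats nonsingular $A$ first and then appeals to continuity for singular $A$. You instead reduce everything to the three scalar invariants $\tr A$, $\tr B$, $\tr(AB)^{1/2}$, using only the definition \eqref{bw-dist}, the observation $\tr(AC)^{1/2}=\tr\,|C^{1/2}A^{1/2}|$, and the already-established \eqref{BW prop 1} to get $\tr(AC)^{1/2}=(1-t)\tr A+t\,\tr(AB)^{1/2}$; the final bookkeeping ($1+(1-t)^2-2(1-t)=t^2$, etc.) checks out. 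Your argument is more elementary in that it needs no external isometry result and no separate limiting argument for singular $A$ (since \eqref{BW prop 1} is stated for all of $\overline{\P}_n$), and your closing remark correctly isolates the single place the PSD hypothesis is used --- dropping the outer absolute value in \eqref{BW prop 2}. What the paper's approach buys in exchange is the finer structural fact relating the square roots themselves rather than just their traces, which is reusable beyond this theorem. One small caveat: your identity $\tr(XY)^{1/2}=\tr(Y^{1/2}XY^{1/2})^{1/2}$ via equality of nonzero spectra is exactly the convention the paper adopts in \eqref{bw-dist}, so you are on safe ground, but it is worth stating explicitly that $(AC)^{1/2}$ is taken to be the square root with nonnegative spectrum for the trace comparison to be unambiguous.
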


\begin{proof} We prove for the nonsingular $A$ case. The singular $A$ case can be done by continuous extension.
Let $U$  and $V$ be the unitary matrix in the  polar decomposition of
$B^{1/2}A^{1/2}$ and $(A\diamond_t B)^{1/2}A^{1/2}$, respectively:
\begin{eqnarray}
B^{1/2}A^{1/2}
&=& U|B^{1/2}A^{1/2}|,
\notag \\
(A\diamond_t B)^{1/2}A^{1/2}
&=& V|(A\diamond_t B)^{1/2}A^{1/2}|
\notag \\
&=& V[(1-t)A+t|B^{1/2}A^{1/2}|\, ].
\notag
\end{eqnarray}
Taking Hermitian transposes on the above equalities, we get
\begin{eqnarray}
A^{1/2}B^{1/2}
&=& |B^{1/2}A^{1/2}| U^*,
\notag \\
A^{1/2}(A\diamond_t B)^{1/2}
&=& [(1-t)A+t|B^{1/2}A^{1/2}|\, ]V^*.
\notag \end{eqnarray}
Therefore,
\begin{eqnarray}
(A\diamond_t B)^{1/2}V
&=&A^{-1/2}[(1-t)A+t|B^{1/2}A^{1/2}|\, ]
\notag \\
&=& (1-t)A^{1/2}+tB^{1/2}U.
\notag \end{eqnarray}
By \cite[Theorem 1]{bhatia2018on},
\begin{eqnarray}  \notag
d_{BW}(A,A\diamond_t B)
&=&
 \|A^{1/2}-(A\diamond_t B)^{1/2}V\|_{F}
 \\ \notag
&=&\|tA^{1/2}-t B^{1/2}U\|_{F}
\\
&=& |t|d_{BW}(A,B).
\notag
\end{eqnarray}
We get \eqref{BW prop 5}.
\end{proof}

As a Riemannian manifold, points in $\overline{\P}_n$ (i.e., PSD matrices) can be projected to the flat tangent space as illustrated in Figure \ref{fig:manifold}. The logarithm function maps points on the manifold to the tangent space while exponential function maps the points on the tangent space back to the manifold. With the log and exp functions, matrices on the manifold can be easily project to the flat tangent space where methods that work in Euclidean space are applicable. Therefore, derivation of the log and exp functions is extremely important and fundamental for the further analysis.

Let $\Hn$ (resp. $\mathrm{U(n)}$) denote the set of $n\times n$ real symmetric (resp. orthogonal) matrices.

\begin{figure}[!tbp]
\centering
  \includegraphics[width=0.48\textwidth]{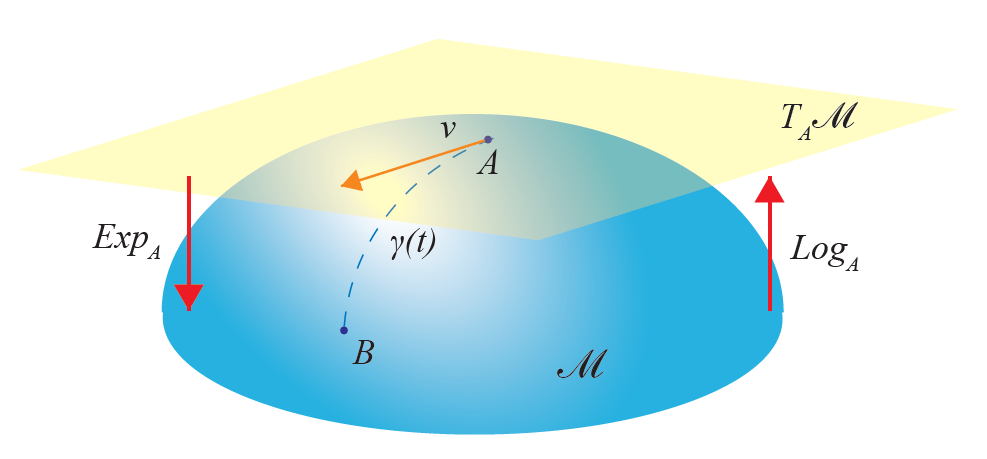}
  \caption{Manifold and its tangent space at point $A \in \overline{\P}_n$}
  \label{fig:manifold}
\end{figure}

\begin{lemma}\label{BWBasicProperties}
  For  $A, B\in \overline{\Pn}$, $X\in \Hn$ such that $\exp_A X$ is well-defined, $U\in\mathrm{U(n)}$, and $t\in\R$,
\begin{eqnarray}\label{BW unitary eq 1}
    d_{BW}(U^*AU, U^*BU) &=& d_{BW}(A,B),
    \\\label{BW unitary eq 2}
    (U^*AU)\diamond_t (U^*BU) &=& U^*(A\diamond_t B)U,
    \\\label{BW unitary eq 3}
    \log_{U^*AU}(U^* BU) &=& U^* (\log_{A}{B}) U,
    \\\label{BW unitary eq 4}
    \exp_{U^*AU}(U^*XU) &=& U^* (\exp_{A}X)U.
\end{eqnarray}
\end{lemma}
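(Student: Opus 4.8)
The plan is to establish the four identities in a logical cascade: first the isometry property \eqref{BW unitary eq 1}, then the geodesic equivariance \eqref{BW unitary eq 2}, and finally deduce \eqref{BW unitary eq 3}--\eqref{BW unitary eq 4} from the fact that $\Phi_U(M):=U^*MU$ is an isometry that carries geodesics to geodesics with the same parametrization. The workhorse throughout is the elementary fact that unitary conjugation commutes with the PSD square root: if $M$ is PSD then $U^*MU$ is PSD and $(U^*MU)^{1/2}=U^*M^{1/2}U$, by uniqueness of the PSD square root. For \eqref{BW unitary eq 1} I would start from $d_{BW}(A,B)=[\tr(A+B)-2\tr(A^{1/2}BA^{1/2})^{1/2}]^{1/2}$. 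Since trace is conjugation invariant, $\tr(U^*AU+U^*BU)=\tr(A+B)$; and applying the square-root fact with $(U^*AU)^{1/2}=U^*A^{1/2}U$ gives $(U^*AU)^{1/2}(U^*BU)(U^*AU)^{1/2}=U^*(A^{1/2}BA^{1/2})U$, whose PSD square root is $U^*(A^{1/2}BA^{1/2})^{1/2}U$ and therefore has the same trace as $(A^{1/2}BA^{1/2})^{1/2}$. This yields \eqref{BW unitary eq 1}.

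For \eqref{BW unitary eq 2} I would substitute directly into the geodesic formula \eqref{gamma}. Using $UU^*=I$ one has $(U^*AU)(U^*BU)=U^*(AB)U$ and likewise $(U^*BU)(U^*AU)=U^*(BA)U$. The only nontrivial point is that conjugation must commute with the square root $(AB)^{1/2}$ in \eqref{gamma} even though $AB$ need not be Hermitian. This is handled by recalling that $AB$ is similar to the PSD matrix $A^{1/2}BA^{1/2}$, hence diagonalizable with nonnegative spectrum, so $(AB)^{1/2}$ is the \emph{unique} matrix with nonnegative eigenvalues squaring to $AB$; conjugating by $U$ preserves both the squaring relation and the spectrum, so $(U^*(AB)U)^{1/2}=U^*(AB)^{1/2}U$, and similarly for $(BA)^{1/2}$. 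Assembling the four conjugated terms and factoring out $U^*(\,\cdot\,)U$ gives \eqref{BW unitary eq 2}.

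Finally I would obtain \eqref{BW unitary eq 3} and \eqref{BW unitary eq 4} from the Riemannian definitions of $\log$ and $\exp$. By \eqref{BW unitary eq 1} the linear map $\Phi_U$ is an isometry of $\overline{\P}_n$, and \eqref{BW unitary eq 2} exhibits it sending the geodesic $t\mapsto A\diamond_t B$ to $t\mapsto (U^*AU)\diamond_t(U^*BU)$ with identical parametrization. Writing $\exp_A X=\gamma_X(1)$ for the geodesic $\gamma_X$ with $\gamma_X(0)=A$ and $\gamma_X'(0)=X$, linearity of $\Phi_U$ shows that $t\mapsto U^*\gamma_X(t)U$ starts at $U^*AU$ with initial velocity $U^*XU$ and is again a geodesic; evaluating at $t=1$ gives \eqref{BW unitary eq 4}. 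Identity \eqref{BW unitary eq 3} then follows since $\log$ is the inverse of $\exp$: setting $X=\log_A B$ so that $\exp_A X=B$, \eqref{BW unitary eq 4} gives $\exp_{U^*AU}(U^*XU)=U^*BU$, i.e. $\log_{U^*AU}(U^*BU)=U^*(\log_A B)U$.

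The step I expect to require the most care is the commutation of conjugation with the non-Hermitian square roots $(AB)^{1/2}$, $(BA)^{1/2}$ in \eqref{BW unitary eq 2}: unlike the PSD case one cannot invoke the spectral theorem directly and must argue through uniqueness of the principal square root on diagonalizable matrices with nonnegative spectrum. A secondary subtlety is making the isometry-to-$\exp$/$\log$ passage rigorous for the possibly singular matrices in $\overline{\P}_n$; if explicit formulas for $\log_A$ and $\exp_A$ are at hand, I would instead verify \eqref{BW unitary eq 3}--\eqref{BW unitary eq 4} by direct substitution into those formulas, again using $(U^*MU)^{1/2}=U^*M^{1/2}U$, which is cleaner than the abstract argument and sidesteps any differentiability concerns at singular points.
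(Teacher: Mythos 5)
Your proof is correct and takes essentially the same route as the paper, which likewise treats \eqref{BW unitary eq 1} and \eqref{BW unitary eq 2} as direct substitutions into \eqref{bw-dist} and \eqref{gamma} and then obtains \eqref{BW unitary eq 3}--\eqref{BW unitary eq 4} from the equivariance of the geodesic under $M\mapsto U^*MU$ (the paper differentiates \eqref{BW unitary eq 2} at $t=0$ to get the log identity and inverts to get the exp identity; you go in the opposite order, which is immaterial). The one small imprecision is your justification that conjugation commutes with $(AB)^{1/2}$: for singular $A$ the similarity $A^{-1/2}(AB)A^{1/2}=A^{1/2}BA^{1/2}$ is unavailable, so you should instead invoke the standard fact (which the paper also relies on) that a product of two PSD matrices is diagonalizable with nonnegative eigenvalues, after which your uniqueness-of-the-principal-square-root argument goes through unchanged.
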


\begin{proof}
The proofs of \eqref{BW unitary eq 1} and \eqref{BW unitary eq 2} are straightforward by \eqref{bw-dist} and  \eqref{gamma}. Taking $\left.\frac{d}{dt}\right|_{t=0}$ on both sides of \eqref{BW unitary eq 2},
we get \eqref{BW unitary eq 3}. Then \eqref{BW unitary eq 4} follows.
\end{proof}

According to the spectral decomposition, every $A\in\overline{\P}_n$ can be written as
$A=U\Lambda U^*$ for a nonnegative diagonal matrix $\Lambda$ and a unitary matrix $U\in \mathrm{U(n)} $.
Lemma \ref{BWBasicProperties} implies that we can transform the BW metric around $A$ to
that around the diagonal matrix $\Lambda$ and simplify the computations.

The log function on $\Pn$ under BW metric has been described in \cite{bhatia2018on}, \cite{massart2020quotient} and \cite{thanwerdas2022riemannian}. We add an approximation of $\log_{A}(A+tX)$ when $tX$ is nearby $0$ as follows.

%\subsection{Exp and Log Functions on $\P_n$ with BW distance}

\begin{theorem} \label{theoremlog}
For any $A, B\in \Pn$, $X\in \Hn$,  and  $t \in \R$ sufficiently close to $0$, we have
\begin{eqnarray}
\log_{A}(B) &=& (AB)^{1/2}+(BA)^{1/2}-2A,\qquad \label{log_A B}\\
\log_{A}(A+tX) &=& t X+\O (t^{2}). \label{themlog}
\end{eqnarray}
\end{theorem}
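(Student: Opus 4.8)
The plan is to read both identities off the single fact that $\log_A$ is the inverse of $\exp_A$ along the geodesic, so that $\log_A B=\left.\frac{d}{dt}\right|_{t=0}(A\diamond_t B)$ is simply the initial velocity of the curve \eqref{gamma} issuing from $A$ and reaching $B$ at $t=1$. This is exactly the interpretation already used in the proof of Lemma~\ref{BWBasicProperties}, where \eqref{BW unitary eq 3} was obtained by differentiating \eqref{BW unitary eq 2} at $t=0$. First I would differentiate $\gamma(t)=(1-t)^2A+t^2B+t(1-t)[(AB)^{1/2}+(BA)^{1/2}]$ term by term, getting $\gamma'(t)=-2(1-t)A+2tB+(1-2t)[(AB)^{1/2}+(BA)^{1/2}]$, and evaluate at $t=0$ to obtain $\gamma'(0)=(AB)^{1/2}+(BA)^{1/2}-2A$. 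This is precisely \eqref{log_A B}, and the step is purely computational.

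For \eqref{themlog} I would substitute $B=A+tX$ into \eqref{log_A B}, which gives $\log_A(A+tX)=(A^2+tAX)^{1/2}+(A^2+tXA)^{1/2}-2A$. The crux is a first-order expansion of the two square roots near $t=0$. Since $A\in\Pn$ is positive definite, $A^2$ is positive definite and the principal square root is real-analytic in a neighbourhood of $A^2$; hence I may write $C(t):=(A^2+tAX)^{1/2}=A+tC'(0)+\O(t^2)$ and $D(t):=(A^2+tXA)^{1/2}=A+tD'(0)+\O(t^2)$, using $C(0)=D(0)=(A^2)^{1/2}=A$. Note that $A$ nonsingular here removes any need for the continuous-extension argument used in the earlier theorems.

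To identify $C'(0)+D'(0)$ I would differentiate the defining relations $C(t)^2=A^2+tAX$ and $D(t)^2=A^2+tXA$ at $t=0$, yielding the two Lyapunov equations $C'(0)A+AC'(0)=AX$ and $D'(0)A+AD'(0)=XA$. Adding them gives $SA+AS=AX+XA$ with $S:=C'(0)+D'(0)$. Because $A$ is positive definite, the Lyapunov operator $S\mapsto SA+AS$ is invertible—diagonalizing $A$ shows its eigenvalues are the sums $\lambda_i+\lambda_j>0$—and $S=X$ is manifestly a solution, so by uniqueness $S=X$. Substituting back gives $\log_A(A+tX)=t(C'(0)+D'(0))+\O(t^2)=tX+\O(t^2)$, which is \eqref{themlog}.

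The main obstacle is the analytic justification underlying the square-root expansion: one must argue that $t\mapsto(A^2+tAX)^{1/2}$ is genuinely differentiable (indeed analytic) at $t=0$, for which the nonsingularity of $A$ is essential, and that the remainders are uniformly $\O(t^2)$ for $t$ in a small neighbourhood of $0$. Once this is granted, the rest reduces to solving a Lyapunov equation, which is routine for positive definite $A$.
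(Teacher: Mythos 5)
Your proof is correct and follows essentially the same route as the paper: both identities come from $\log_A B=\gamma'(0)$ of the geodesic \eqref{gamma}, and the expansion \eqref{themlog} is obtained by a first-order expansion of the matrix square roots combined with a Lyapunov equation ($SA+AS=AX+XA$, whose unique solution for positive definite $A$ is $S=X$). The only difference is organizational: the paper first conjugates, writing $[A(A+tX)]^{1/2}=A^{1/2}(A^{2}+tA^{1/2}XA^{1/2})^{1/2}A^{-1/2}$, so that only the standard expansion of a \emph{Hermitian} square root is needed, whereas you expand the two non-Hermitian principal square roots $(A^{2}+tAX)^{1/2}$ and $(A^{2}+tXA)^{1/2}$ directly, which requires the (true, but slightly less elementary) analyticity of the principal square root near a matrix whose spectrum lies in the open right half-plane --- a point you correctly identify as the step needing justification.
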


\begin{proof}
%We denote $A\diamond_t B:=\gamma(t)$.
By \eqref{gamma}, the tangent vector of geodesic
$\gamma(t)$  at $A$  is
\begin{equation}\notag
\log_A(B)= \gamma'(0) = (AB)^{1/2}+(BA)^{1/2}-2A.
\end{equation}
For $X\in \Hn$ and  $t \in \R$ sufficiently close to $0$, we have $A+tX\in\Pn$, so that
\begin{eqnarray}  \notag
\log_{A}(A+tX)=[A(A+tX)]^{1/2}+[(A+tX)A]^{1/2}-2A.
\end{eqnarray}
 Moreover,
\begin{eqnarray}  \notag
[A(A+tX)]^{1/2}\notag
&=& [A^{1/2}(A^2+t A^{1/2}X A^{1/2})A^{-1/2}]^{1/2} \\ \notag
&=& A^{1/2}(A^2+t A^{1/2}X A^{1/2})^{1/2} A^{-1/2},
\\
\notag
[(A+tX)A]^{1/2}
&=& [A^{-1/2}(A^2+t A^{1/2}X A^{1/2})A^{1/2}]^{1/2} \\ \notag
&=& A^{-1/2}(A^2+t A^{1/2}X A^{1/2})^{1/2} A^{1/2}
\end{eqnarray}
Let $C:=(A^2+t A^{1/2}X A^{1/2})^{1/2}$. Then
\begin{equation} \label{log expansion}
    \log_{A}(A+tX)= A^{1/2}CA^{-1/2}+A^{-1/2}CA^{1/2}-2A.
\end{equation}
When $t$ is close to 0, $C$ can be expressed as a power series of $t$:
\begin{eqnarray}  \label{log: C}
C
= (A^{2}+tA^{1/2}XA^{1/2})^{1/2}
= A+tZ+\O (t^{2}).
\end{eqnarray}
Taking squares, we have
\begin{eqnarray}  \notag
A^2+ t A^{1/2}XA^{1/2}
 &=& [A+tZ+\O (t^{2})]^2
\\
&=& A^2+t(AZ+ZA)+\O (t^{2}).
\qquad\quad
\end{eqnarray}
Therefore, $AZ+ZA=A^{1/2}XA^{1/2}$. By \eqref{log expansion} and \eqref{log: C},
\begin{eqnarray} \notag
&&\log_{A}(A+tX)  \\ \notag
%&=&  A^{1/2}CA^{-1/2}+A^{-1/2}CA^{1/2}-2A \\ \notag
&=& A^{\frac{1}{2}}(A+tZ )A^{-\frac{1}{2}}+A^{-\frac{1}{2}}(A+tZ )A^{\frac{1}{2}}
-2A+\O(t^2) \qquad
\\ \notag
&=& t A^{-\frac{1}{2}}(AZ+ZA)A^{-\frac{1}{2}}+\O(t^2) \\
&=& tX+\O(t^2).
\qquad
\notag \qedhere
\end{eqnarray}
\end{proof}

The exponential map has been studied in \cite{malago2018wassersteingd} and \cite{THANWERDAS2023163}. Here we give a concise form of  the exponential map and  provide its exact domain.
We also  provide an approximation of $\exp_{A}(X)$ when $X$ is a Hermitian matrix nearby  $0$.
Let $A\circ B$ denote the Hadamard product of matrices $A$ and $B$ of the same size.

%% Exp lemma
\begin{lemma} \label{lemmaexp}
Let $A=\diag(\lambda_{1},..., \lambda_{n})\in \Pn$ be a positive diagonal matrix.
Denote $W:=\left(\frac{1}{\lambda_{i}+\lambda_{j}}\right)_{n \times n}$. Then for every Hermitian matrix $X\in \Hn$ such that
$I_n+W\circ X$ is PSD,
\begin{eqnarray}  \label{expdiag}
\begin{split}
%\exp_{A}(X)=A^{-1/2} \left( \left(\frac{1}{\lambda_{i}+\lambda_{j}}\right)_{n \times n} \circ (A^{1/2}XA^{1/2}+2A^{2})\right)^{2}A^{-1/2}
\exp_{A}(X) & =A+X+ (W \circ X )A(W \circ X ).
\end{split}
\end{eqnarray}
 In particular, for $t \in \R$ sufficiently close to $0$,
\begin{eqnarray}  \label{lemma2}
\exp_{A}(tX)=A+tX+\O(t^{2}).
 \end{eqnarray}
\end{lemma}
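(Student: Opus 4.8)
The plan is to identify $\exp_A(X)$ with the time-one point of the BW geodesic issuing from $A$ with initial velocity $X$, and then to put that geodesic in closed form. The key device is the Lyapunov operator $S\mapsto AS+SA$. Since $A=\diag(\lambda_1,\dots,\lambda_n)$ is positive definite this operator is invertible, and because $A$ is diagonal the unique solution of $AS+SA=X$ is given entrywise by $S_{ij}=X_{ij}/(\lambda_i+\lambda_j)$, that is $S=W\circ X$, which is Hermitian. I would then propose
\begin{equation}\notag
\gamma(t):=(I_n+tS)A(I_n+tS)=A+t(AS+SA)+t^2 SAS,
\end{equation}
which satisfies $\gamma(0)=A$ and $\gamma'(0)=AS+SA=X$, as the geodesic, so that $\exp_A(X)=\gamma(1)=A+X+SAS=A+X+(W\circ X)A(W\circ X)$ once $\gamma$ is recognized as a BW geodesic.

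To verify that $\gamma$ is the BW geodesic $A\diamond_t B$ with $B:=\gamma(1)$, I would match $\gamma(t)$ against the defining formula \eqref{gamma}, which reduces to evaluating $(AB)^{1/2}$ and $(BA)^{1/2}$. Writing $M:=A^{1/2}(I_n+S)A^{1/2}$, a direct conjugation gives $A^{-1/2}(AB)A^{1/2}=M^2$ and $A^{1/2}(BA)A^{-1/2}=M^2$; provided the principal square root of $M^2$ is $M$ itself, these yield $(AB)^{1/2}=A(I_n+S)=A+AS$ and $(BA)^{1/2}=(I_n+S)A=A+SA$. Substituting into \eqref{gamma} and collecting powers of $t$, the constant term is $A$, the linear term is $t(AS+SA)=tX$, and the quadratic term is $t^2 SAS$, so $A\diamond_t B=\gamma(t)$ exactly; in particular $\gamma'(0)=X$ confirms $\exp_A(X)=B$. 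Equivalently, the same two square-root evaluations feed directly into \eqref{log_A B} to give $\log_A(B)=(A+AS)+(A+SA)-2A=X$.

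The main obstacle is precisely the step ``the principal square root of $M^2$ is $M$,'' which holds only when $M$ is PSD, since otherwise $(M^2)^{1/2}=|M|\neq M$. Now $M=A^{1/2}(I_n+S)A^{1/2}$ is PSD if and only if $I_n+S=I_n+W\circ X$ is PSD, because $A^{1/2}$ is invertible. This is exactly the hypothesis of the lemma, and it is also what pins down the exact domain: the eigenvalues of $I_n+tS$ are $1+t\mu_k$ with $\mu_k$ the eigenvalues of $S$, so $\gamma(t)=(I_n+tS)A(I_n+tS)$ remains in $\overline{\P}_n$ for all $t\in[0,1]$ iff $I_n+S\succeq 0$. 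This identifies $\{X\in\Hn:\ I_n+W\circ X \text{ is PSD}\}$ as the precise domain on which \eqref{expdiag} is valid.

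Finally, the approximation \eqref{lemma2} follows by replacing $X$ with $tX$: the Lyapunov solution scales linearly to $tS$, so \eqref{expdiag} reads $\exp_A(tX)=A+tX+t^2 SAS$, whose last term is manifestly quadratic in $t$. For $t$ near $0$ the PSD hypothesis is automatic since the eigenvalues of $I_n+tS$ tend to $1$, and the expansion $\exp_A(tX)=A+tX+\O(t^2)$ follows.
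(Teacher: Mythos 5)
Your proof is correct, and at its core it is the same computation as the paper's, just run in the opposite direction. The paper starts from the equation $\log_A(B)=X$, i.e. $(AB)^{1/2}+(BA)^{1/2}=2A+X$, writes $(AB)^{1/2}=A^{1/2}CA^{-1/2}$ with $C=(A^{1/2}BA^{1/2})^{1/2}$, exploits the diagonality of $A$ to turn this into the Hadamard-product equation $\left(\tfrac{\lambda_i+\lambda_j}{\lambda_i^{1/2}\lambda_j^{1/2}}\right)_{n\times n}\circ C=2A+X$, solves for $C=A^{1/2}(I_n+W\circ X)A^{1/2}$ (your $M$), and recovers $B=A^{-1/2}C^{2}A^{-1/2}$; the hypothesis that $I_n+W\circ X$ be PSD enters there as the requirement that $C$ be a legitimate (PSD) square root. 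You instead invert the same Lyapunov/Hadamard equation first to get $S=W\circ X$, guess $B=(I_n+S)A(I_n+S)$, and verify via \eqref{gamma} that $A\diamond_t B=A+tX+t^{2}SAS$ has initial velocity $X$. Your packaging buys the clean factored form $\exp_A(X)=(I_n+W\circ X)\,A\,(I_n+W\circ X)$ and an explicit parametrization of the whole geodesic; the price is that you must justify $(M^{2})^{1/2}=M$, which you correctly reduce to the stated PSD hypothesis via congruence by the invertible $A^{1/2}$. One small slip in your aside about the exact domain: $\gamma(t)=(I_n+tS)A(I_n+tS)=NN^{*}$ with $N=(I_n+tS)A^{1/2}$ is automatically PSD for \emph{every} $t$, so when $I_n+S\not\succeq 0$ the failure mode is not that $\gamma$ leaves $\overline{\P}_n$ but that $\gamma$ ceases to coincide with the geodesic $A\diamond_t B$ (the square-root identification $(M^2)^{1/2}=M$ breaks). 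This does not affect the validity of your proof of the lemma as stated.
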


\begin{proof}
Firstly, let $B=\exp_A(X)$ and
\begin{equation}\label{exp: C}
C:=(A^{1/2}BA^{1/2})^{1/2}.
\end{equation} Then \eqref{log_A B} and the assumption $A=\diag(\lambda_{1},\ldots,\lambda_{n})$  imply that
\begin{eqnarray}\notag
2A+X & =& (AB)^{1/2}+(BA)^{1/2}  \\ \notag
     & =& A^{1/2}(A^{1/2}BA^{1/2})^{1/2}A^{-1/2}
     \\\notag &&\quad
    +A^{-1/2}(A^{1/2}BA^{1/2})^{1/2}A^{1/2}
    \\ \notag & = & A^{1/2}CA^{-1/2}+A^{-1/2}CA^{1/2}
    \\ \notag & = & \left(\frac{\lambda_i+\lambda_j}{\lambda_i^{1/2} \lambda_j^{1/2}} \right)_{n\times n}\circ C.
\\
\notag
 C &=& \left(\frac{\lambda_i^{1/2} \lambda_j^{1/2}}{\lambda_i+\lambda_j} \right)_{n\times n}\circ (2A+X)
 \\
 &=& A+\left(\frac{\lambda_i^{1/2} \lambda_j^{1/2}}{\lambda_i+\lambda_j} \right)_{n\times n}\circ X.
\end{eqnarray}
By \eqref{exp: C}, $C$ must be PSD, which is equivalent to that
$A^{-1/2}CA^{-1/2}=I_n+W\circ X$ is PSD. In such a case,
\begin{eqnarray*}
B &=& \exp_{A}(X)=A^{-1/2}C^{2}A^{-1/2}
\\ &=&  A^{-1/2} \left(A+\left(\frac{\lambda_i^{1/2} \lambda_j^{1/2}}{\lambda_i+\lambda_j} \right)_{n\times n}\circ X \right)^{2} A^{-1/2}
\\ &=& A+X+ (W\circ X)A(W\circ X).
\end{eqnarray*}
The expression of \eqref{expdiag} is obtained.
Replace $X$ by $tX$ for small $t$, then \eqref{lemma2} follows.
\end{proof}

Similarly, the exponential function and approximation that are derived in \textbf{Lemma \ref{lemmaexp}}  for the case of $A$ being a positive diagonal matrix  also can be further defined for general cases.

%% Exp them
\begin{theorem}
Suppose $A \in \Pn$ has the spectral decomposition $A=U\Lambda U^{*}$, where
$U$ is a unitary matrix and  $\Lambda=\diag(\lambda_1,...,\lambda_n)$.
Denote $W=\left(\frac{1}{\lambda_{i}+\lambda_{j}} \right)_{n \times n}$.
Then for every Hermitian matrix $X \in \Hn$ such that $I_n+W\circ X_{U}$ is PSD where
$X_{U} :=U^*XU$, we have
\begin{equation}\label{expgen}
 \exp_{A}(X)=A+X+U\left[ (W\circ X_{U})\Lambda (W\circ X_{U}) \right]U^*.
\end{equation}
In particular, when  $t\in\R$ is sufficiently closed to zero, we have
\begin{eqnarray} \label{thmexp}
\exp_{A}(tX)=A+tX+\O (t^{2}).
\end{eqnarray}
\end{theorem}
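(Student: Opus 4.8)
The plan is to reduce the general case to the diagonal case already settled in Lemma \ref{lemmaexp}, using the unitary equivariance of the exponential map recorded in \eqref{BW unitary eq 4}. The key observation is that if $A=U\Lambda U^*$ is a spectral decomposition, then $\Lambda=U^*AU$, so applying \eqref{BW unitary eq 4} with the unitary matrix $U$ yields $\exp_\Lambda(U^*XU)=U^*(\exp_A X)U$, that is, $\exp_\Lambda(X_U)=U^*(\exp_A X)U$. Rearranging gives $\exp_A X=U\,\exp_\Lambda(X_U)\,U^*$, which is the pivotal identity.

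Next I would invoke Lemma \ref{lemmaexp} for the positive diagonal matrix $\Lambda$ and the Hermitian matrix $X_U$. The hypothesis of the present theorem, that $I_n+W\circ X_U$ is PSD, is exactly the admissibility condition demanded by Lemma \ref{lemmaexp}; this is not a coincidence but the reason the condition is phrased in terms of $X_U$ rather than $X$. Lemma \ref{lemmaexp} then gives $\exp_\Lambda(X_U)=\Lambda+X_U+(W\circ X_U)\Lambda(W\circ X_U)$, and substituting into the pivotal identity produces $\exp_A X=U[\Lambda+X_U+(W\circ X_U)\Lambda(W\circ X_U)]U^*$.

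Finally I would simplify the three conjugated terms. Since $U\Lambda U^*=A$ and $UX_UU^*=UU^*XUU^*=X$ by unitarity and the definition $X_U=U^*XU$, the first two terms collapse to $A+X$, and the last term remains $U[(W\circ X_U)\Lambda(W\circ X_U)]U^*$, establishing \eqref{expgen}. For the approximation \eqref{thmexp}, replacing $X$ by $tX$ scales $X_U$ to $tX_U$, so the quadratic term acquires a factor $t^2$ and $\exp_A(tX)=A+tX+\O(t^2)$ follows.

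I do not anticipate a genuine obstacle: the argument is bookkeeping layered on two earlier results. The only points needing care are confirming that the PSD hypothesis transports unchanged under diagonalization (which holds by design, since $X_U$ is precisely the matrix entering $W\circ X_U$) and correctly tracking the conjugations by $U$ so that the $A+X$ part emerges cleanly.
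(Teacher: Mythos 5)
Your proposal is correct and follows exactly the paper's own route: reduce to the diagonal case via the unitary equivariance identity \eqref{BW unitary eq 4}, apply Lemma \ref{lemmaexp} to $\Lambda$ and $X_U$ (whose admissibility condition is precisely the stated PSD hypothesis), and collapse $U\Lambda U^*$ and $UX_UU^*$ back to $A+X$. The treatment of \eqref{thmexp} by substituting $tX$ also matches the paper.
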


\begin{proof}
By Lemma \ref{lemmaexp}, under the assumption of $X$,
\begin{eqnarray*}
\exp_{A}X &=& \exp_{U\Lambda U^*}X=U\exp_{\Lambda} (U^*XU) U^*
\\ &=& U[\Lambda+X_U+(W\circ X_U)\Lambda(W\circ X_U)]U^*
\\ &=& A+X+U[W\circ X_U)\Lambda(W\circ X_U]U^*.
\end{eqnarray*}
Replace $X$ by $tX$, then we get \eqref{thmexp}.
\end{proof}

\subsection{Estimate the Barycenter of PSD matrices with BW distance}

To summarize a set of numbers, arithmetic mean is used to measure the central tendency while standard deviation is used to measure the variation or dispersion of the numbers. Similarly, we also need metrics to characterize the distribution of a set of PSD matrices on $\overline{\P}_n$. With BW distance, we estimate the central tendency of a set of PSD matrices $A_1,\ldots, A_m \in \overline{P}_n$ by the Fr\'echet mean, defined as the following:
\begin{equation} \label{Fmean}
%\mbox{Fr\'echet Mean:} \
    \overline{A}(A_1,\ldots, A_m)=\argminA_{X \in \overline{\P}_n}
    \sum_{i=1}^m{d_{BW}^2(A_i, X)}
\end{equation}
The Fr\'echet mean is also called the barycenter in literatures (e.g. \cite{bhatia2018on, yger2017riemannian}). With the Fr\'echet mean, we can further quantify the dispersion of matrices around their Fr\'echet mean via the Fr\'echet variance as the following:
\begin{equation} \label{sd}
    \sigma^2=\frac{1}{m}\sum_{i=1}^m{d_{BW}^2(A_i, \overline{A})}
\end{equation}

To estimate the Fr\'echet mean for a set of matrices on $\overline{P}_n$, we propose three methods: Inductive Mean Algorithm, Projection Mean Algorithm, and Cheap Mean Algorithm. The error tolerance in the stopping criteria of each algorithm is denoted as $\varepsilon$. Details are discussed in the following.

\begin{figure}[!tbp]
  \centering
    \includegraphics[width=0.45\textwidth]{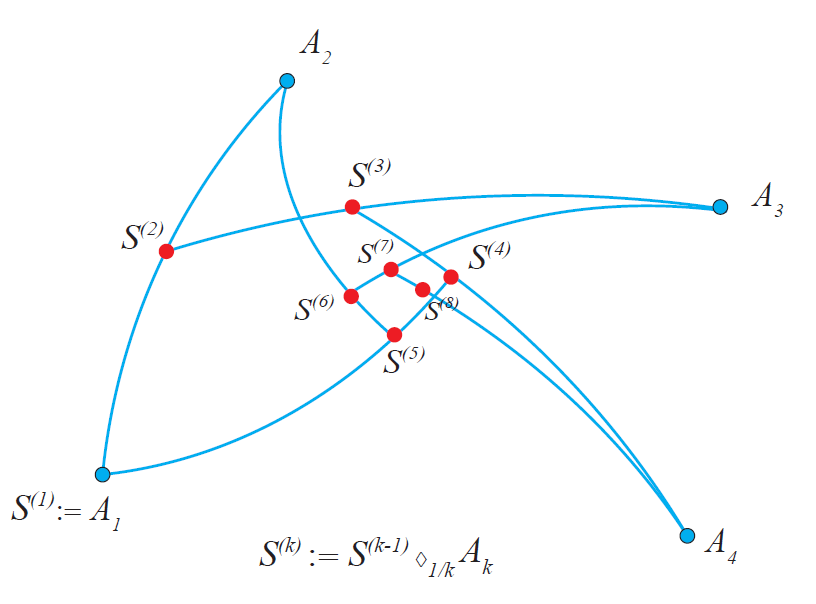}
    \caption{Illustration of the Inductive Mean Algorithm.}
    \label{fig:IMalgorithm}
\end{figure}

%------------------------------------------------------------------------------------
%% Inductive Mean algorithm
{\bf Inductive Mean Algorithm}
Given $m$ PSD matrices $A_1,\ldots, A_m \in \overline{P}_n$, the {\bf Inductive Mean Algorithm} estimates their Fr\'echet mean using the geodesic that connects two points on the manifold. Details of the algorithm is summarized as the following, and we illustrate the process using four matrices in Figure \ref{fig:IMalgorithm}.
%The inductive mean method described in \cite{antezana2018ergodic} is proved to be valid to find the Fr\'echet mean of points on Hadamard spaces like $(\Pn,\delta_R)$.  Even though $(\Pn, d_{BW})$ is not a Hadamard space, the inductive mean method still works in almost all situations.

\begin{enumerate}
\item Define the sequence $\{A_{k}\}_{k\in\N}$ such that $A_{k}=A_{k+m}=A_{k+2m}=\cdots$ for all $k\in\N$.

\item Let $S^{(1)}:=A_1.$
For $k=2, 3,\ldots,$ let
\begin{equation*}
\begin{split}
S^{(k)} &:= S^{(k-1)}\diamond_{\frac{1}{k}} A_{k} \\
&=  \frac{(k-1)^2}{k^2}  S^{(k-1)}+\frac{1}{k^2} A_{k}+ \\
& \qquad \frac{k-1}{k^2}\left [(S^{(k-1)}A_{k})^{1/2}+(A_{k}S^{(k-1)})^{1/2}\right ].
\end{split}
\end{equation*}

\item The limit of $\{S^{(k)}\}_{k\in\N}$ is the Fr\'echet mean of $\{A_1,\ldots,A_m\}$ with $d_{BW}$:
\begin{equation}
\lim_{k\to\infty} S^{(k)} = \overline{A}(A_1,\ldots, A_m).
\end{equation}

In practice, the iteration process stops when $d_{BW}(S^{(k)},S^{(k-1)}) \le \varepsilon$, where $S^{(k)}$ is the estimated Fr\'echet mean.

\end{enumerate}

By the design of this algorithm, points in $\{S^{(k)}\}_{k\in\N}$ are located in the compact region bounded by the geodesics connecting $A_1,\ldots,A_m$, and the convergence point of $\{S^{(k)}\}_{k\in\N}$ is the unique convergence point. Therefore, the Inductive Mean algorithm is valid for estimating the barycenter of PSD matrices.

The computation process of the algorithm is simple, and is not sensitive to the size and number of matrices as long as the maximal distance between two matrices in $A_1,\ldots,A_m$ is bounded. The convergence rate is uniform but slow compared to the following two algorithms.

%------------------------------------------------------------------------------
%% Projection Mean Algorithm
{\bf Projection Mean Algorithm}
Different from the Inductive Mean Algorithm which uses the geodesic on the manifold,  {\bf Projection Mean Algorithm} on $(\overline{P}_n, d_{BW})$ leverages the Log and Exp  functions that project matrices between the manifold and tangent space. The iteration process of the algorithm is summarized in the following:

\begin{enumerate}
\item Let $S^{(0)}:=\frac{1}{m}\sum_{j=1}^m A_j$.

\item Suppose $S^{(\ell)}$ is known for some $\ell\in\N$, update $S^{(\ell+1)}$ as follows:

(a) Project $\{A_1,\ldots,A_m\}$ onto the tangent space at $S^{(\ell)}$ by \eqref{log_A B}:
        \begin{equation} \notag
            X_j^{(\ell)}:=(S^{(\ell)}A_j)^{1/2}+(A_jS^{(\ell)})^{1/2}-2S^{(\ell)}.
        \end{equation}
(b) Find the arithmetic mean of the projection vectors.
    \begin{equation*}
        \begin{split}
            X^{(\ell)} & :=\frac{1}{m}\sum_{j=1}^{m} X_{j}^{(\ell)} \\
            & = \frac{1}{m}\sum_{j=1}^{m}\left[(S^{(\ell)}A_j)^{1/2}+(A_jS^{(\ell)})^{1/2}\right]-2S^{(l)}.
        \end{split}
        \end{equation*}
(c) $S^{(\ell+1)}$ is updated as the exponential of $X^{(\ell)}$ at $S^{(\ell)}$ by \eqref{expgen}.
     \begin{equation*}
           S^{(\ell+1)} := S^{(\ell)} +  X^{(\ell)} +
             U^{(\ell)}[(W^{(\ell)}\circ X_U^{\ell})\Lambda^{(\ell)} (W^{(\ell)}\circ X_U^{\ell})]{U^{(\ell)}}^*.
     \end{equation*}
    where
\begin{eqnarray*}
            S^{(\ell)} &=& U^{(\ell)} \Lambda^{(\ell)} {U^{(\ell)}}^*,
            \quad U^{(\ell)}\in \mathrm{U(n)},\\
            \Lambda^{(\ell)} &=& \diag(\lambda_{1}^{(\ell)}, \lambda_{2}^{(\ell)}, \cdots, \lambda_{n}^{(\ell)})\\
        W^{(\ell)} &:=& \begin{pmatrix}
             \frac{1}{\lambda_{i}^{(\ell)}+\lambda_{j}^{(\ell)}}
           \end{pmatrix}_{n\times n}\\
        X_U^{(\ell)} &=& U^{(\ell)*}X^{(\ell)}U^{(\ell)}.
\end{eqnarray*}

\item The limit of $\{S^{(l)}\}_{l\in\N}$ is the approximation of the Fr\'echet mean of $\{A_1,\ldots,A_m\}$ with $d_{BW}$:
\begin{equation}
\lim_{l\to\infty} S^{(l)} = \overline{A}(A_1,\ldots, A_m).
\end{equation}
In practice, the iteration process stops when $d_{BW}(S^{(l)},S^{(l-1)}) \le \varepsilon$, where $S^{(l)}$ is the estimated Fr\'echet mean.

\end{enumerate}

Compared to the Inductive Mean algorithms, the Projection Mean algorithm converges much faster and has much less computational cost. For instance, when $\varepsilon = 10^{-3}$, the Projection Mean algorithm usually converges within $10$ iterations regardless of the number and dimension of matrices, while the Inductive Mean algorithm takes up to $\O(10^3)$ iterations.

%------------------------------------------------------------------------------
%% Cheap Mean Algorithm
{\bf Cheap Mean Algorithm}
Similar to the Projection Mean Algorithm, the {\bf Cheap Mean Algorithm} also leverages the Log and Exp functions to project matrices between the manifold and the tangent space.
Differently, the Cheap Mean algorithm also updates the original matrices when updating the barycenter. The details of the iterative process are summarized in the following:

% Cheap Mean Algorithm originally developed on $(\Pn,\delta_R)$ in \cite{bini2011anote}.

\begin{enumerate}
    \item Let $A_k^{(0)}=A_k$ for $k=1,\ldots,m.$

    \item Suppose $A_1^{(\ell)},\ldots,A_m^{(\ell)}$ are known for some $\ell\in\N$. For each $k\in\{1,\ldots,m\}$, we project the geodesic curves connecting $A_k^{(\ell)}$ to $A_1^{(\ell)},\ldots,A_m^{(\ell)}$ onto the tangent space at $A_k^{(\ell)}$. Then find the arithmetic mean of the projection vectors. The exponential of this arithmetic mean at $A_k^{(\ell)}$ is denoted by $A_k^{(\ell+1)}$.

    By \eqref{log_A B}, the projection of   $A_j^{(\ell)}$ onto the tangent space at $A_k^{(\ell)}$ is
        \begin{equation} \notag
            X_{kj}^{(\ell)}:=(A_k^{(\ell)}A_j^{(\ell)})^{1/2}+(A_j^{(\ell)}A_k^{(\ell)})^{1/2}-2A_k^{(\ell)}.
        \end{equation}
     The arithmetic mean of the projection vectors is
        \begin{equation*}
        \begin{split}
            X_{k}^{(\ell)} & :=\frac{1}{m}\sum_{j=1}^{m} X_{kj}^{(\ell)} \\
            & = \frac{1}{m}\sum_{j=1}^{m}\left[(A_k^{(\ell)}A_j^{(\ell)})^{1/2}+  (A_j^{(\ell)}A_k^{(\ell)})^{1/2}\right]
             -2A_k^{(\ell)}. \qquad
        \end{split}
        \end{equation*}
        The spectral decomposition of $A_k^{(\ell)}$ is
        \begin{equation*}
        \begin{split}
            A_k^{(\ell)}
            &=U_k^{(\ell)} \Lambda_k^{(\ell)} {U_k^{(\ell)}}^*,
            \quad U_k^{(\ell)}\in \mathrm{U(n)},\\
            \Lambda_k^{(\ell)}
            &=\diag(\lambda_{k1}^{(\ell)}, \lambda_{k2}^{(\ell)}, \cdots, \lambda_{kn}^{(\ell)}).
        \end{split}
        \end{equation*}
        Denote the matrices
        \begin{eqnarray*}
        W_k^{(\ell)} &:=& \begin{pmatrix}
             \frac{1}{\lambda_{ki}^{(\ell)}+\lambda_{kj}^{(\ell)}}
           \end{pmatrix}_{n\times n},
        \\
        X_{k,U}^{(\ell)} &:=& {U _{k}^{(\ell)}}^*X_{k}^{(\ell)} U _{k}^{(\ell)}.
        \end{eqnarray*}
        By \eqref{expgen}, the exponential of $X_{k}^{(\ell)}$ at $A_{k}^{(\ell)}$ is
        \begin{equation*}
        \begin{split}
            A_k^{(\ell+1)} &:= A_k^{(\ell)}+X_{k}^{(\ell)}
            \\
            &+U_{k}^{(\ell)}\left[(W_{k}^{(\ell)}\circ X_{k,U}^{(\ell)})\Lambda_{k}^{(\ell)}(W_{k}^{(\ell)}\circ X_{k,U}^{(\ell)}) \right] {U _{k}^{(\ell)}}^*
            \end{split}
        \end{equation*}

    \item All sequences $\{A_k^{(\ell)}\}_{\ell\in\N}$ for $k=1,\ldots,m$ converge to the same limit,
    which is called the Cheap Mean $\overline{A}'(A_1,\ldots, A_m)$:
    \begin{equation*}
    \begin{split}
        \overline{A}'(A_1,\ldots, A_m) & := \lim_{\ell\to\infty} A_1^{(\ell)} = \lim_{\ell\to\infty} A_2^{(\ell)} \\
        & =\cdots =  \lim_{\ell\to\infty} A_m^{(\ell)}.
    \end{split}
    \end{equation*}
    In practice, the iteration process stops when $d_{BW}(\frac{1}{m}\sum_{k=1}^{m}A_{k}^{(l)},\frac{1}{m}\sum_{k=1}^{m}A_{k}^{(l-1)}) \le \varepsilon$, where $\frac{1}{m}\sum_{k=1}^{m}A_{k}^{(l)}$ is the estimated Fr\'echet mean.

\end{enumerate}

When $m=2$, the Cheap Mean Algorithm produces the true Fr\'echet Mean. When $m>2$, the Cheap Mean $\overline{A}'(A_1,\ldots, A_m)$ is an approximation of the true Fr\'echet Mean \cite{bini2011anote}.
In practice, the computation cost of this algorithm is relatively low and the convergence rate is competitive to the Projection Mean Algorithm. The algorithm is much faster compared to the inductive mean algorithm, however slower compared with the Projection algorithm.
%it worths investigating the difference between the Cheap Mean $\overline{A}'(A_1,\ldots, A_m)$ and the  Fr\'echet Mean $\overline{A}(A_1,\ldots, A_m)$, especially when some of $A_1,\ldots,A_m$ are far way from the others.

\section{Experimental Results} \label{result}

Extensive simulations are conducted to comprehensively investigate the robustness of BW distance, the accuracy, efficiency, and robustness of estimating the Fr\'echet Mean of PSD matrices using BW distance on the manifold $\overline{P}_n$. Both efficiency and robustness are compared with the commonly used AI distance (\ref{innerR}). Considering the fact that AI distance only works for positive definite matrices, small values ($O(10^{-3})$) are added to the zero-eigenvalues to make PSD matrices positive definite in our simulations. Table \ref{tab:simulation setting} summarizes the setting of the parameters used in the simulations.

\begin{table}[htbp]
\normalsize
    \caption{Simulation Parameter Setting} \label{tab:simulation setting}
    \centering
    \begin{tabular}{p{5cm}p{3cm}}
        \hline
        Simulation Parameter & Values \\
        \hline
        $n$: Dimension of Matrices & $[5,10,20,30,50,100]$ \\
        $m$: Number of Matrices & $[5,10,20,30,50,100]$ \\
        $p$: Proportion of close-to-zero eigenvalues & $[0.1, 0.2, 0.4, 0.6, 0.8]$ \\
        \hline
    \end{tabular}
\end{table}

\begin{figure}[htbp]
\centering
  \includegraphics[width=0.48\textwidth]{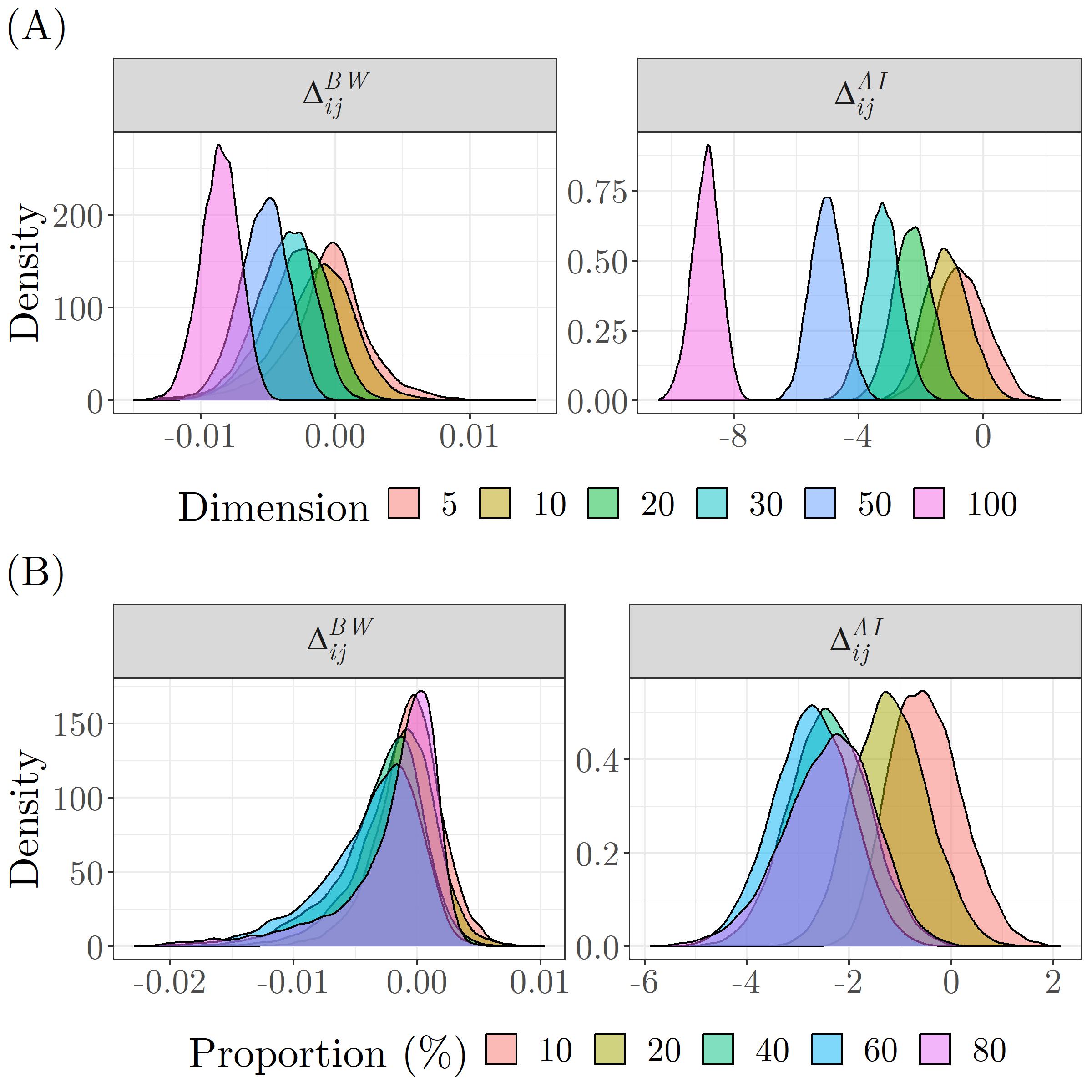}
  \caption{Empirical distributions of $\Delta_{ij}^{AI}$ and $\Delta_{ij}^{BW}$ with (A) varying matrix dimension (B) varying proportion of close-to-zero eigenvalues. }
  \label{fig:dist compare}
\end{figure}

\subsection{Robustness of BW distance}

The robustness of a distance measure is quantified by the changes in distance when matrices are contaminated by small perturbations. The smaller the changes are, the more robust the distance measure is. In this simulation, we test the robustness of BW distance, compare it with AI distance, and also investigate the factors that influence the robustness of BW distance.

Specifically, $100$ pairs of positive definite matrices are randomly generated with $n$ being 10 and $p$ being $0.2$, denoted as $(A_i, B_i), i = 1, \ldots, 100$. Then, $100$ pairs of Hermitian perturbation matrices, denoted as $(E_{ij}^A, E_{ij}^B), j = 1, \ldots, 100$, are randomly generated for each pair of $(A_i, B_i)$, with their spectral norms on the same scale as the smallest eigenvalues of $(A_i, B_i)$, i.e., $10^{-3}$.
The contaminated matrices $(\tilde{A}_{ij}, \tilde{B}_{ij})$ are obtained by adding the perturbation matrices on $(A_i, B_i)$, i.e., $\tilde{A}_{ij} = A_i + E_{ij}^A, \tilde{B}_{ij} = B_i + E_{ij}^B$.
The robustness of distance measures are quantified as the difference in the distances between the two matrices with and without perturbation:

$$\Delta_{ij}^{AI} = d_{AI}(A_{i}, B_{i}) - d_{AI}(\tilde{A}_{ij}, \tilde{B}_{ij})$$
$$\Delta_{ij}^{BW} = d_{BW}(A_{i}, B_{i}) - d_{BW}(\tilde{A}_{ij}, \tilde{B}_{ij})$$
where $i,j = 1, \ldots, 100$.

With $p$ being $0.2$, the empirical distributions of $\Delta_{ij}^{AI}$ and $\Delta_{ij}^{BW}$ with varying matrix dimension $n$ are shown in Figure \ref{fig:dist compare}(A). Overall, $\Delta^{BW}$ is in much smaller scale than $\Delta^{AI}$, which implies that BW distance is much more robust than AI distance when matrices are affected by perturbations. Besides, the matrix dimension also affect the robustness of both BW and AI distances (all p-values $< 0.01$ for Kruskal-Wallis and Dunn test). The higher the matrix dimension is, the less robust the distance is.

With $n$ being $10$, the robustness of distances is also investigated with respect to the proportion of close-to-zero eigenvalues $p$. Figure \ref{fig:dist compare}(B) shows the empirical distribution of $\Delta_{ij}^{AI}$ and $\Delta_{ij}^{BW}$ with varying $p$. Overall, the scale of $\Delta^{BW}$ is much smaller than $\Delta^{AI}$, i.e., BW distance is more robust than AI distance. However, $p$ has less impact on the robustness of both distances than $n$ does (p-values $< 0.01$ for Kruskal-Wallis test, but not all p-values are significant for Dunn tests with $0.01$ significance level, especially for AI distance).

In summary, BW distance is more robust when matrices are affected by perturbations. Besides, BW distance requires much less computing time than AI distance does, which is not hard to show by comparing equation (\ref{disR}) and (\ref{bw-dist}).

\subsection{Robustness of Barycenter for Two Matrices}

The Fr\'echet Mean of two matrices is simply the mid point of the geodesic, i.e., $A\#B$ (\ref{geom}) with AI distance and $A\diamond_{1/2} B$ (\ref{Wmean}) with BW distance. The robustness of barycenter refers to whether the barycenter would be affected when the two matrices are contaminated by small perturbations. In this simulation, we investigate the robustness of the two barycenters $A\diamond_{1/2} B$ and $A\#B$ and the contributing factors of the robustness.

$200$ pairs of positive definite matrices are randomly generated with $n$ being 10 and $p$ being $0.2$, denoted as $(A_i, B_i), i = 1, \ldots, 200$. For each pair of $(A_i, B_i)$, $100$ Hermitian perturbation matrices $(E_{ij}^A, E_{ij}^B), j = 1, \ldots, 100$ are also randomly generated with their spectral norms being $O(10^{-3})$. Then the Fr\'echet Means of $(A_i, B_i)$ are denoted as $M_{i}^{AI}$ and $M_{i}^{BW}$ for AI and BW distance respectively. For matrices with perturbations, the Fr\'echet Means are noted as $\tilde{M}_{ij}^{AI}$ and $\tilde{M}_{ij}^{BW}$ as follows:
$$ M_{i}^{AI} = A_i \# B_i, \quad  M_{i}^{BW} = A_i \diamond_{1/2} B_i$$
$$ \tilde{M}_{ij}^{AI} = \tilde{A}_{ij} \# \tilde{B}_{ij},  \quad \tilde{M}_{ij}^{BW} = \tilde{A}_{ij} \diamond_{1/2} \tilde{B}_{ij}$$
where $\tilde{A}_{ij} = A_i + E_{ij}^A, \tilde{B}_{ij} = B_i + E_{ij}^B, i = 1, \ldots, 200, j = 1, \ldots, 100$ are matrices with random perturbations.

\begin{figure}[htbp]
\centering
  \includegraphics[width=0.48\textwidth]{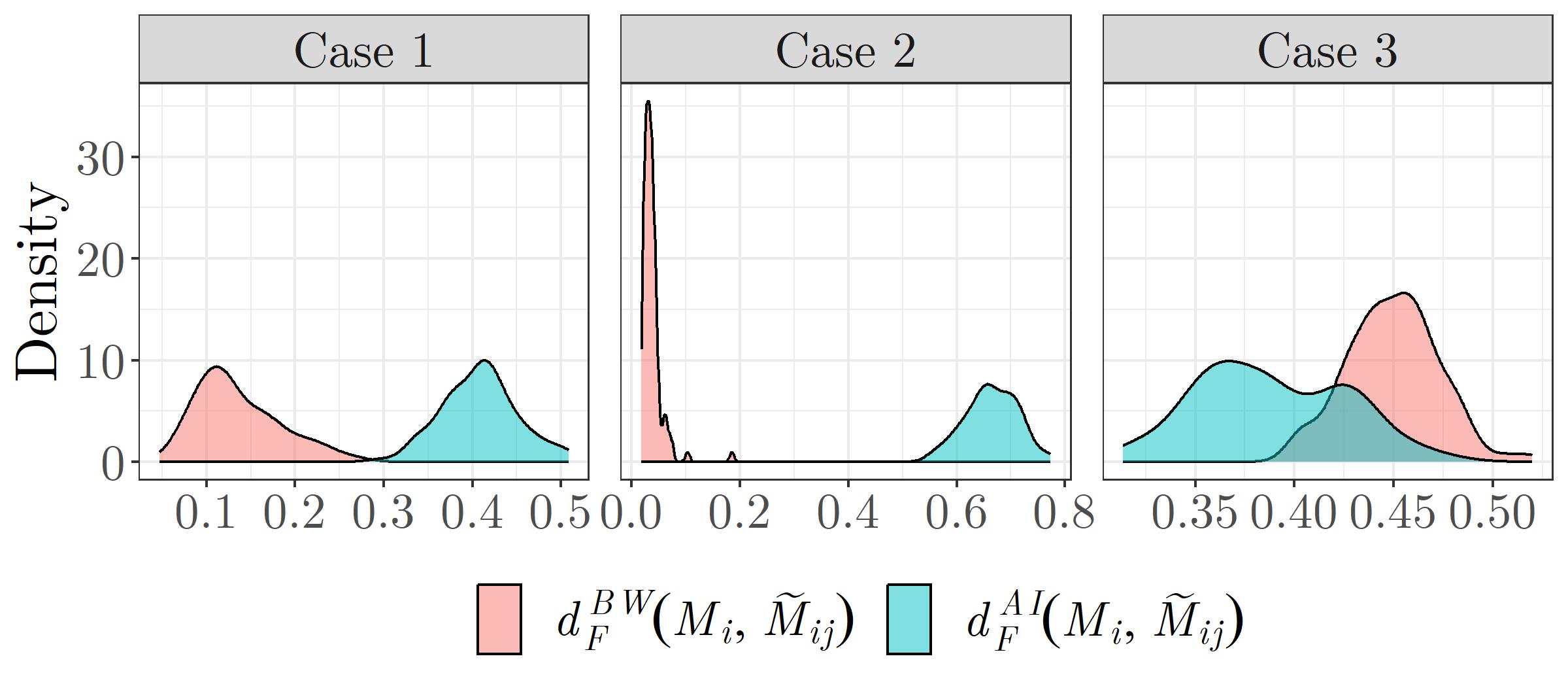}
  \caption{Empirical distributions of $d_{F}^{BW}(M_{i},\tilde{M}_{ij})$ and $d_{F}^{AI}(M_{i},\tilde{M}_{ij})$ in $3$ representative cases of (1) $0 \le\delta_{i}\le Q3$ (2) $\delta_{i} > Q3$ (3) $\delta_{i} < 0$. }
  \label{fig:3 cases}
\end{figure}

To quantify the robustness of barycenter, we record the Frobenius distance between the barycenters of matrices with and without perturbations, i.e., $d_{F}(M_{i}^{\circ},\tilde{M}_{ij}^{\circ})$, for AI and BW measure, respectively. The smaller the distance $d_{F}(M_{i}^{\circ},\tilde{M}_{ij}^{\circ})$ is, the more robust the barycenter is.
Note that the Frobenius distance between two matrices is defined as
$$d_{F}(X,Y) = ||X - Y||_F=\sqrt{tr[(X-Y)^{*}(X-Y)]}.$$
For each pair of $(A_i, B_i)$, the distribution of $\{d_{F}(M_{i}^{\circ},\tilde{M}_{ij}^{\circ}), j=1, \ldots, 100\}$ for AI and BW distance are compared. Figure \ref{fig:3 cases} shows three representative cases of $\{d_{F}(M_{i}^{\circ},\tilde{M}_{ij}^{\circ})\}$.
To quantify the differences between the two distributions, i.e., $\{d_{F}(M_{i}^{AI},\tilde{M}_{ij}^{AI})\}$ and $\{d_{F}(M_{i}^{BW},\tilde{M}_{ij}^{BW})\}$, for each $(A_i, B_i)$ pair, we record the relative difference of the two sample means as the evaluation metric:
$$ \delta_{i} = \frac{\overline{d_{F}(M_{i}^{AI},\tilde{M}_{ij}^{AI})} - \overline{d_{F}(M_{i}^{BW},\tilde{M}_{ij}^{BW})}}{\overline{d_{F}(M_{i}^{BW},\tilde{M}_{ij}^{BW})}} $$
$$\overline{d_{F}(M_{i}^{\circ},\tilde{M}_{ij}^{\circ})} = \frac{1}{100}\sum_{j = 1}^{100}d_{F}(M_{i}^{\circ},\tilde{M}_{ij}^{\circ})$$
where $i=1, \ldots, 200$. $\delta_i$ quantifies the differences in the robustness of BW and AI barycenter. If $\delta_i$ is positive (e.g., case 1 and 2 in Figure~\ref{fig:3 cases}), the BW barycenter is more robust.

Table \ref{tab:5 statistics} summarizes the distribution of $\{\delta_i\}$ for $100$ pairs. In Figure \ref{fig:relative difference}, case 1 represents the most common scenario ($73\%$) where $d_{F}(M_{i}^{BW},\tilde{M}_{ij}^{BW})$ is mostly smaller than $d_{F}(M_{i}^{AI},\tilde{M}_{ij}^{AI})$. Case 2 shows the common scenario ($25\%$) when $d_{F}(M_{i}^{BW},\tilde{M}_{ij}^{BW})$ is significantly smaller than $d_{F}(M_{i}^{AI},\tilde{M}_{ij}^{AI})$. Both case 1 and 2 represents cases when BW barycenter is more robust than AI barycenter. Case 3 depicts the least common scenario ($2\%$) where AI barycenter shows greater robustness than BW barycenter ($\delta_{i}<0$), but with a considerable amount of overlap between the two distributions. Note that the maximum and minimum of $\{\delta_{i}\}$ are $35.04$ and $-0.423$ respectively, which implies that AI barycenter is not significant more robust than BW barycenter even in the rare occurrence of case 3.

\begin{table}[htbp]
\normalsize
    \caption{Descriptive statistics of the distribution of $\{\delta_{i}, i=1, \ldots, 100\}$ with $n = 10$ and $p = 0.2$. }
    \label{tab:5 statistics}
    \centering
    \begin{tabular}{p{1cm}p{1cm}p{1cm}p{1cm}p{1cm}p{1cm}}
        \hline
        Min & Q1 & Median & Mean & Q3 & Max \\
        \hline
        -0.42 & 2.67 & 8.23 & 9.53 & 14.81 & 35.04 \\
        \hline
    \end{tabular}
\end{table}

With the proportion of close-to-zero eigenvalues $p$ being $0.2$, we investigate how matrix dimension $n$ influence the robustness of BW and AI barycenters. Figure \ref{fig:relative difference} (A) shows the empirical distribution of $\{\delta\}$ with varying $n$. The line plot in the top right corner describes the median of each distribution of $\{\delta\}$. Overall, the majority of $\delta$ locates on the positive sides regardless of $n$ while negative $\delta$ rarely occur in our simulations. It indicates that BW barycenter is more robust than AI barycenter, i.e., the BW barycenter is less sensitive to changes in the matrices. Besides, as $n$ increases, the distribution of $\delta_{i}$ becomes more concentrated towards $0$ and a monotonic decrease in the median is also observed. In another word, the robustness superiority of BW barycenter over AI is more significant for lower dimensional matrices (p-value $< 0.05$ for Kruskal-Wallis test and $11$ out of the $15$ pairwise comparisons of Dunn test are significant).

\begin{figure}[htbp]
\centering
  \includegraphics[width=0.48\textwidth]{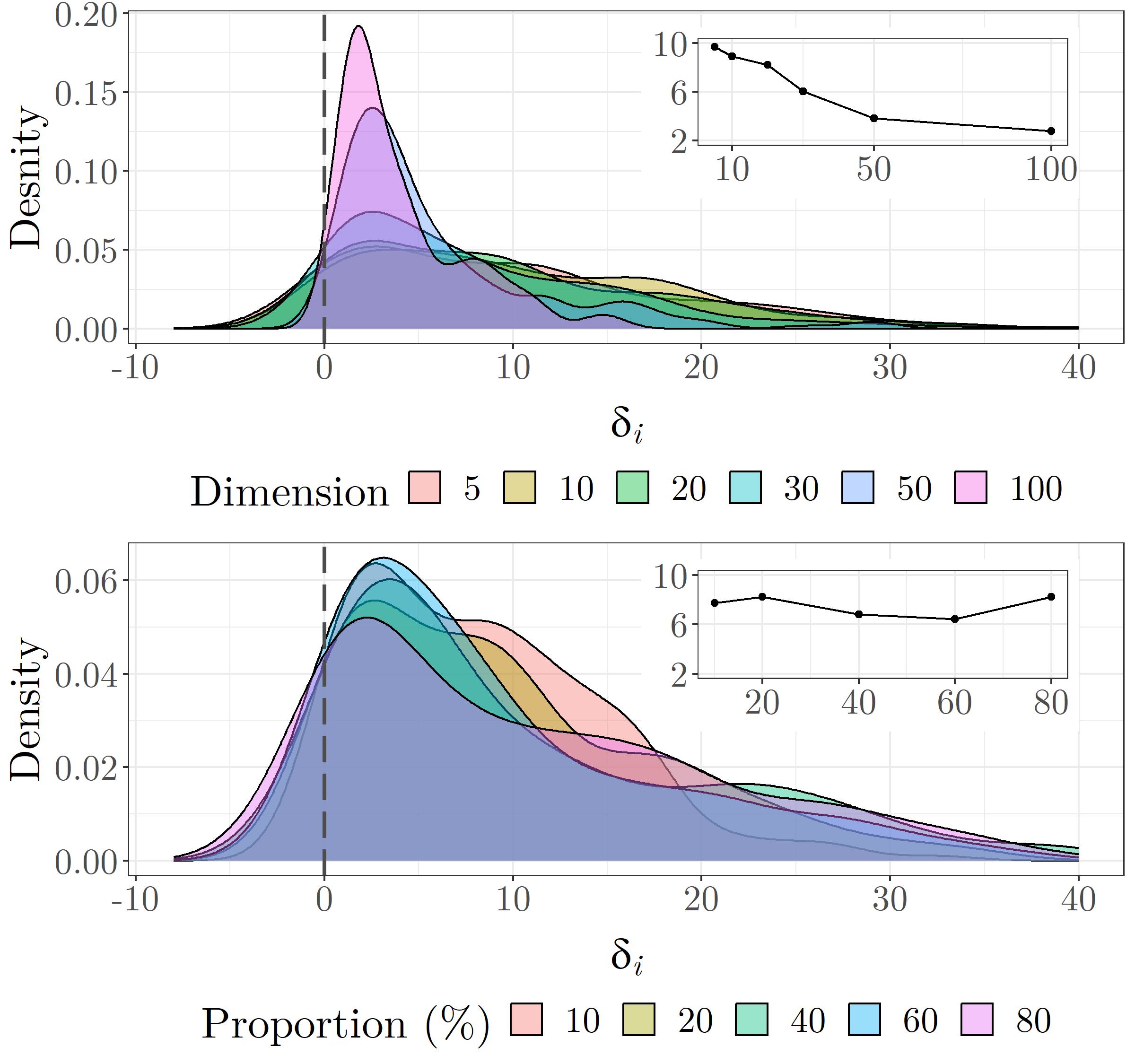}
  \caption{Empirical distributions of $\delta_{i}$ with (A) varying matrix dimension (B) varying proportion of close-to-zero eigenvalues. }
  \label{fig:relative difference}
\end{figure}

With matrix dimension $n$ being $10$, we also study the impact of $p$ on the robustness of the two barycenters, as shown in Figure \ref{fig:relative difference} (B). Overall, majority of $\delta$ lies in the positive side, which implies that BW barycenter is more robust than AI barycenter. Different from $n$, the distribution of $\delta$ with varying $p$ share similar shapes with similar medians (p-value is not significant for Kruskal-Wallis test). Therefore, $p$ has little impact on the differences in the robustness of BW and AI barycenter.

\subsection{Accuracy, Efficiency, and Robustness of Barycenter Estimation for More Than Two Matrices}

Now, we consider a set of positive definite matrices, $A_1, \ldots, A_m, m>2$ with $p$ being 0.2. The number of matrices $m$ and matrix dimension $n$ are chosen according to the grid given in Table~\ref{tab:simulation setting}. In this simulation, we comprehensively study the accuracy, efficiency, and robustness of the proposed three algorithms for barycenter estimation using BW distance. For stopping criteria, we use $\varepsilon = 10^{-3}$.

\subsubsection{Accuracy of Barycenter Estimation with BW distance}

\begin{figure}[!tbp]
\centering
  \includegraphics[width=0.49\textwidth]{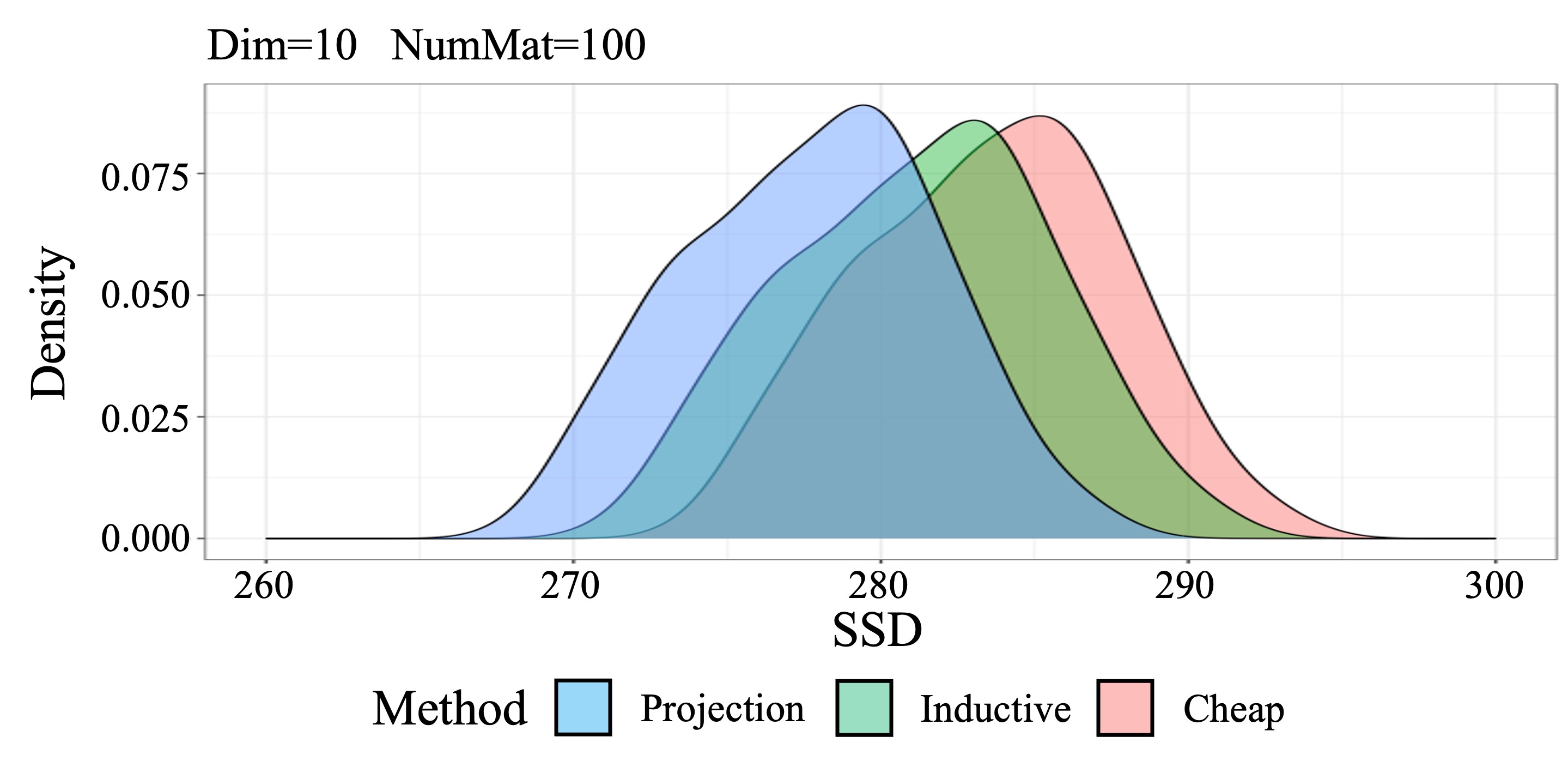}
  \caption{Empirical distribution of $\{SSD^{\circ}\}$ for the three mean algorithms, with matrix dimension being 10 and number of matrix being 100.}
  \label{fig:accuracy SSD}
\end{figure}

Though the true Fr\'echet mean is unknown, it is the one that minimizes the sum of the squared BW distances between each matrix and itself by definition. Therefore, the sum of the squared distances (SSD) is chosen to be the evaluation metric for the accuracy of the barycenter estimation. The smaller the SSD is, more accurate the barycenter estimation is.
$$ SSD^{\circ} = \sum_{i=1}^{m} d^{2}_{BW}(A_{i},M^{\circ})$$
where $M^{\circ}$ denotes the barycenter estimated by the three algorithms. I denotes Inductive mean algorithm, P denotes Projection mean algorithm, and C denotes Cheap Mean algorithms in the following.
%Wilcoxon signed-rank tests are further performed to check the significant difference between each method.

Take $n=10$ and $m=100$ as an example. In the $k^{th}$ iteration ($k=1, \ldots, 100$), $100$ positive definite matrices are randomly generated with dimensions being $10$ and $2$ eigenvalues being close-to-zeros (e.g., $O(10^{-3})$. The barycenter of the $100$ matrices (i.e., $M^I, M^P, M^C$) are then obtained via the three algorithms respectively. The SSD of each barycenter is obtained accordingly, denote them as $SSD^I_k, SSD^P_k, SSD^C_k$. The process is iterated $100$ times, resulting in a collection of $\{SSD^{\circ}_k, k=1, \ldots, 100\}$. Figure \ref{fig:accuracy SSD} shows the empirical distribution of $\{SSD^{I}\}, \{SSD^{P}\}, \{SSD^{C}\}$. $\{SSD^{P}\}$ is significantly less than the other two with p-values being $<0.05$ in Wilcoxon signed-rank tests. Therefore, projection mean algorithm is the most accurate among the three methods.

\begin{figure*}[h]
\centering
  \includegraphics[width=\textwidth]{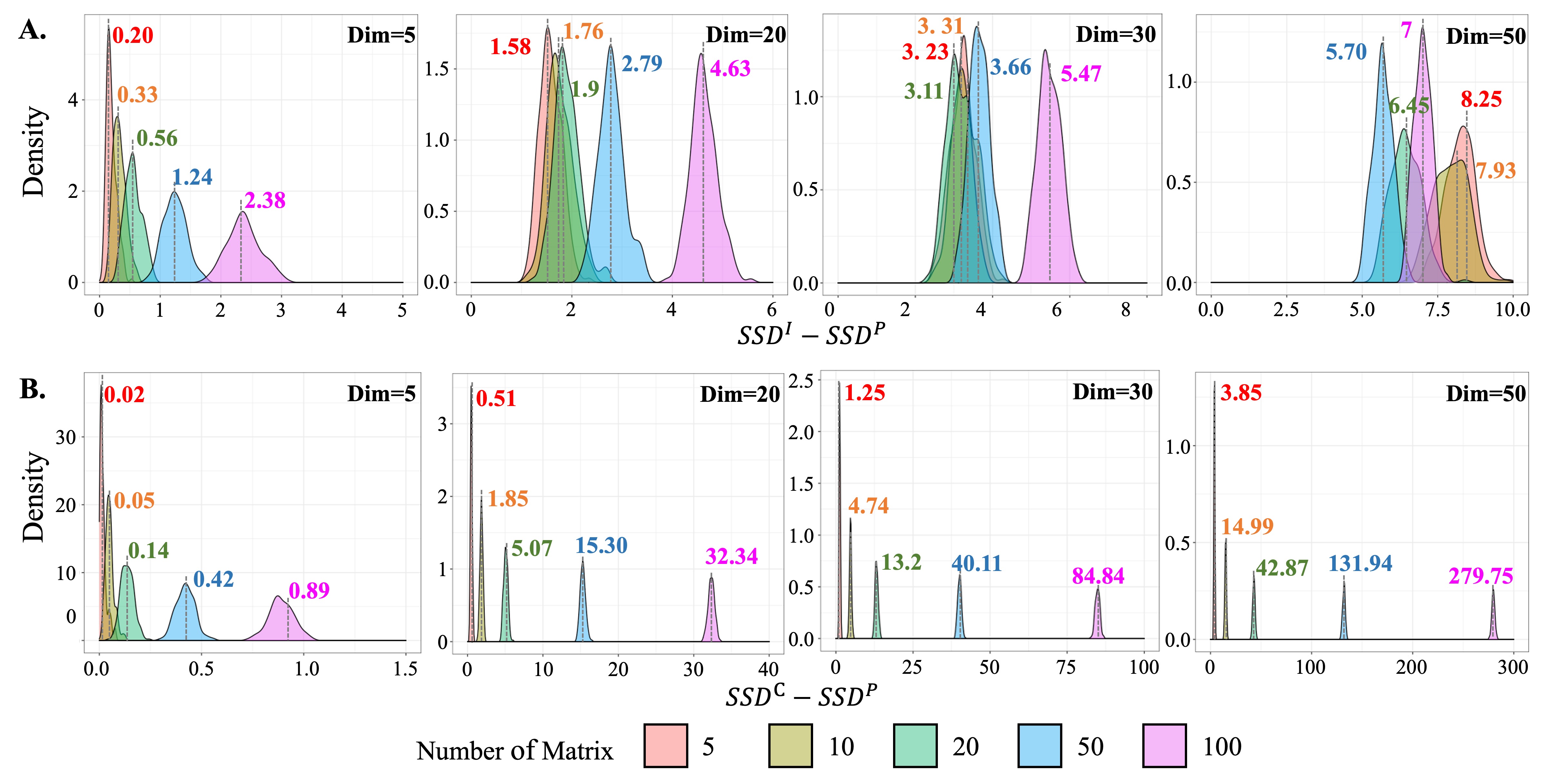}
  \caption{Accuracy of barycenter estimation algorithms. (A) Distribution of $(SSD^{I}-SSD^{P})$. (B) Distribution of $(SSD^{C}-SSD^{P})$. All distributions are significantly different from each other (all p-values$<0.05$).}
  \label{fig:accuracy all}
\end{figure*}

We then change the matrix dimension $n$ and the number of matrix $m$, and perform the aforementioned process. For the comparison purpose, we denote $\{SSD^I_k-SSD^P_k\}_{k=1}^{100}$ as $SSD^I-SSD^P$ and $\{SSD^C_k-SSD^P_k\}_{k=1}^{100}$ as $SSD^C-SSD^P$. Figure \ref{fig:accuracy all} shows the empirical distributions of $SSD^I-SSD^P$ and $SSD^C-SSD^P$.
Overall, both $SSD^I-SSD^P$ and $SSD^C-SSD^P$ are positive, which implies that the project mean algorithm is the most accurate, regardless of $n$ and $m$.
Compared with $SSD^I-SSD^P$, $SSD^C-SSD^P$ is in a much larger scale except for dimension being 5. It indicates that inductive mean algorithm is generally more accurate than cheap mean algorithm, except for low dimensional matrices.
Besides, cheap mean algorithm is less accurate when the number of matrices or dimension of matrices increases (Figure \ref{fig:accuracy all} (B)).

The differences between inductive mean and projection mean also increases when the matrix dimension is higher. However, unlike cheap mean, the influence of matrix number $m$ is not monotonic for inductive mean algorithm. For instance, when $n$ is 50, the smallest SSD happens when $m$ is also 50. When $n$ is 30, the smallest SSD occurs when $m$ is 20. Therefore, we suspect that inductive mean algorithm is most accurate when the number of matrices is close to the dimension of matrices.

In summary, the barycenter estimated by projection mean algorithm is the most accurate regardless of the number of matrices and dimension of matrices. Inductive mean algorithm is more accurate than cheap mean algorithm except for low dimensional matrices.

\subsubsection{Efficiency of Barycenter Estimation with BW distance}

To evaluate the efficiency of barycenter estimation algorithms, the running time is chosen as the evaluation metric. The efficiency of the proposed three algorithms with BW distance is investigated and compared with the algorithms coupled with AI distance. Besides, the contributing factors that affect the efficiency of each algorithm is also comprehensively studied.

\begin{figure}[!tbp]
\centering
  \includegraphics[width=0.49\textwidth]{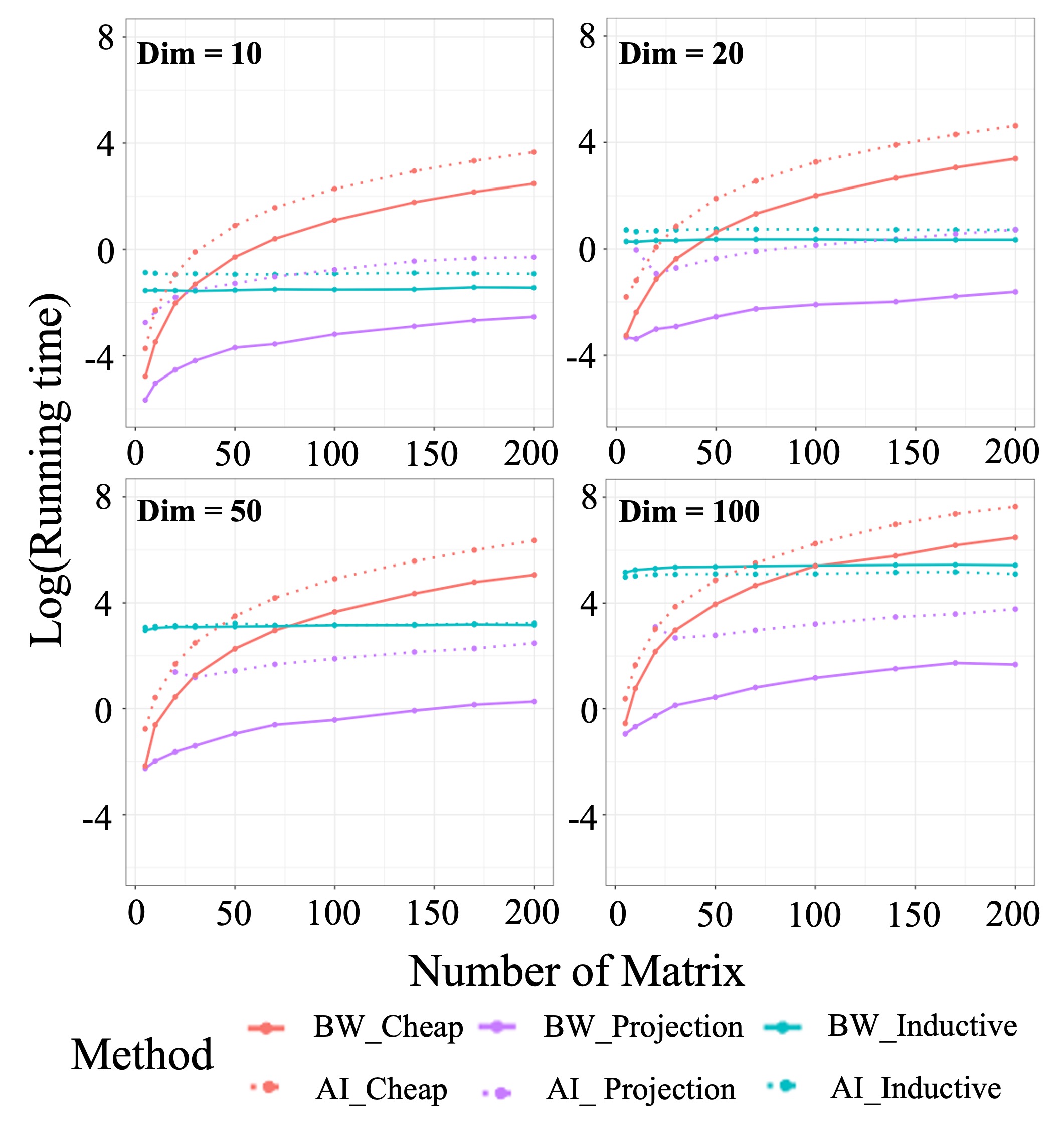}
  \caption{Comparison of efficiency of Barycenter estimation algorithms coupled with BW and AI distance.}
  \label{fig:efficiency more than two}
\end{figure}

Similarly, positive definite matrices are randomly generated with different $n$ and $m$ chosen from Table \ref{tab:simulation setting}. The running time is averaged over 100 iterations. The details of the barycenter algorithms coupled with AI distance are discussed in \cite{jeuris2012survey}.
Figure \ref{fig:efficiency more than two} shows the comparison of the running time (seconds) in log scale.

Comparing the two distances, both projection mean and cheap mean algorithms performs more efficiently when coupled with BW distance regardless of matrix dimension ($n$) and number of matrices ($m$). For inductive mean algorithm, it is more efficient when coupled with BW distance for low dimensional matrices.
The impact of the contributing factors (i.e., $n$ and $m$) is similar for both distances. For instance, regardless of the distance choice, the efficiency of inductive mean algorithm is mainly affected by matrix dimension; the cheap mean and projection mean algorithms become slower when $n$ or $m$ is larger.

For BW distance, the projection mean algorithm is the most efficient regardless of $n$ and $m$. Depending on $n$, cheap mean algorithm is more efficient when $m$ is small and less efficient when $m$ becomes large compared with inductive mean algorithm.
For AI distance, the cheap mean algorithm is the most efficient when $m$ is small, for instance, less than 10 matrices. This is consistent with the finding in \cite{jeuris2012survey}. For more matrices (i.e., $m$ is large), projection mean and inductive mean become more efficient. Depending on the matrix dimension, projection mean algorithm is more efficient when $n$ is large. However, it is worth noting that projection mean algorithm does not converge when it is coupled with AI distance and $n$ is much larger than $m$ \cite{jeuris2012survey}.

Overall, the projection mean algorithm coupled with BW distance is the most efficient regardless of $n$ and $m$.

\subsubsection{Robustness of Barycenter Estimation with BW distance}

The robustness of barycenter estimation refers to the changes in the estimated barycenter caused by perturbations that contaminate the matrices. Therefore, we use the Frobenius distance between the estimated Fr\'echet mean with and without perturbations as the evaluation metric to quantify the robustness of algorithms.

With $n$ and $m$ chosen from Table \ref{tab:simulation setting}, in the $i^{th}$ iteration $(i=1, \ldots, 100)$, positive definite matrices $\{A_{i}^{1},...,A_{i}^{m}\}$ are randomly generated.
The contaminated matrices, $\{\tilde{A}_{ij}^{1},...,\tilde{A}_{ij}^{m},j=1, \ldots,100\}$, are then obtained by adding the randomly generated Hermitian perturbation matrices $E_{ij}^{k}$ on $A^k_i$ as follows:
$$\tilde{A}_{ij}^{k}=A_{i}^{k}+E_{ij}^{k},i,j=1,...,100,k=1,...,m.$$
The robustness of barycenter estimation is then quantified as the Frobenius distance between the estimated Fr\'echet mean with and without perturbation:
$$d_{F,ij}^{\circ} = ||M_{i}^{\circ},\tilde{M}_{ij}^{\circ}||_{F},i,j=1,...,100$$
where $M_{i}^{\circ}$ and $\tilde{M}_{ij}^{\circ}$ denotes the barycenter of matrices estimated by difference algorithms without and with perturbations respectively. For the comparison of robustness, we focus on three algorithms, inductive mean with AI distance, inductive mean with BW distance, and projection mean with BW distance because cheap mean is time consuming especially for large $m$ and projection mean with AI does not converge for small $m$.

\begin{figure}[!tbp]
\centering
  \includegraphics[width=0.49\textwidth]{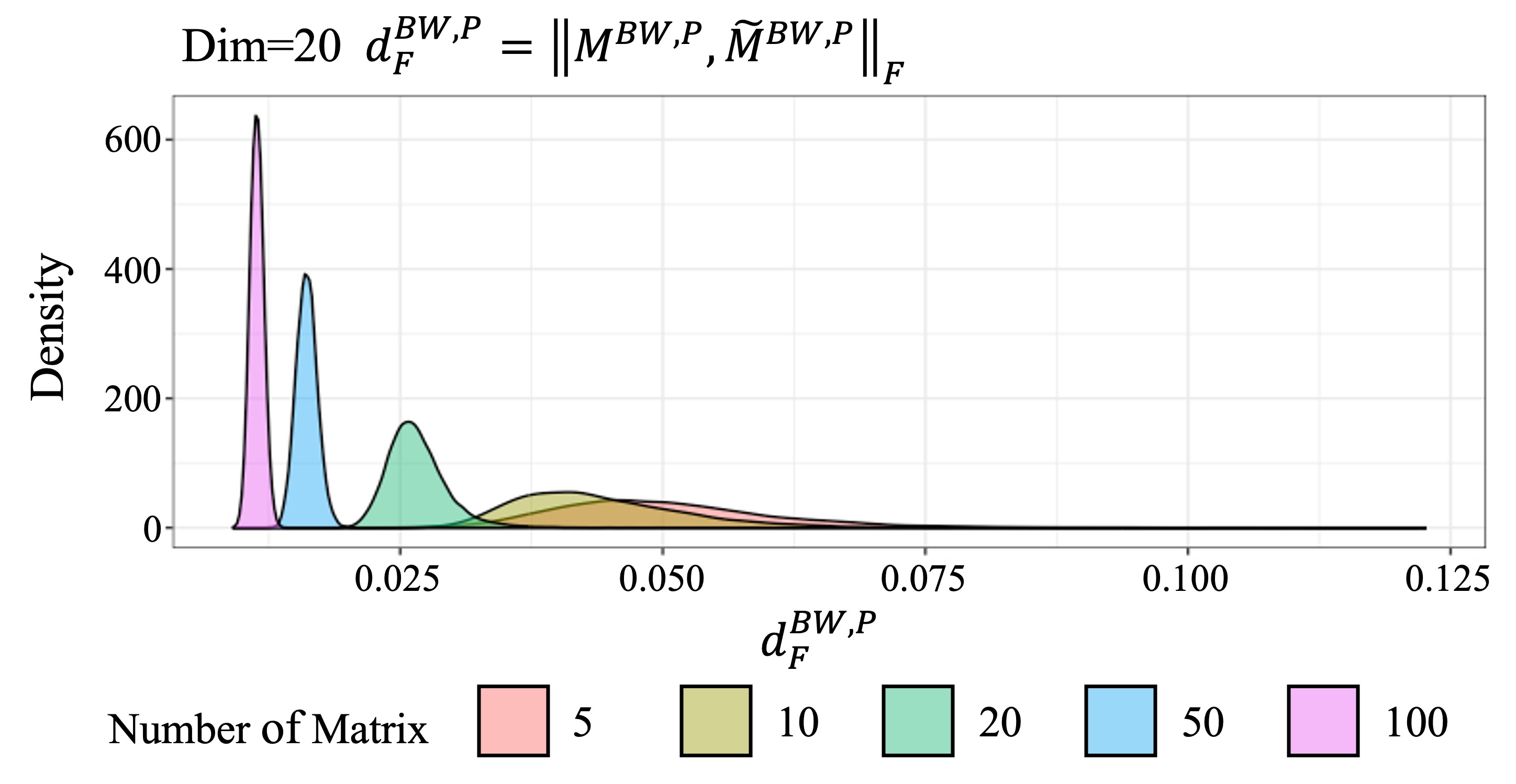}
  \caption{Distribution of Frobenious distance between the Fr\'echet mean estimated by BW projection mean algorithm for $20\times20$ matrices without and with perturbations.}
  \label{fig:robustness_bw_projection}
\end{figure}

With $n$ being 20, Figure \ref{fig:robustness_bw_projection} displays the distribution of $\{d_{F,ij}^{BW,P}\}$. The contributing factor $m$ has significant impact on the robustness of projection mean algorithm (BW). The estimated barycenter is more robust when there are more matrices (all p-values $< 0.05$ for Kruskal-Wallis test and Pairwise Wilcoxon Rank Sum test).

\begin{figure*}[h]
\centering
  \includegraphics[width=\textwidth]{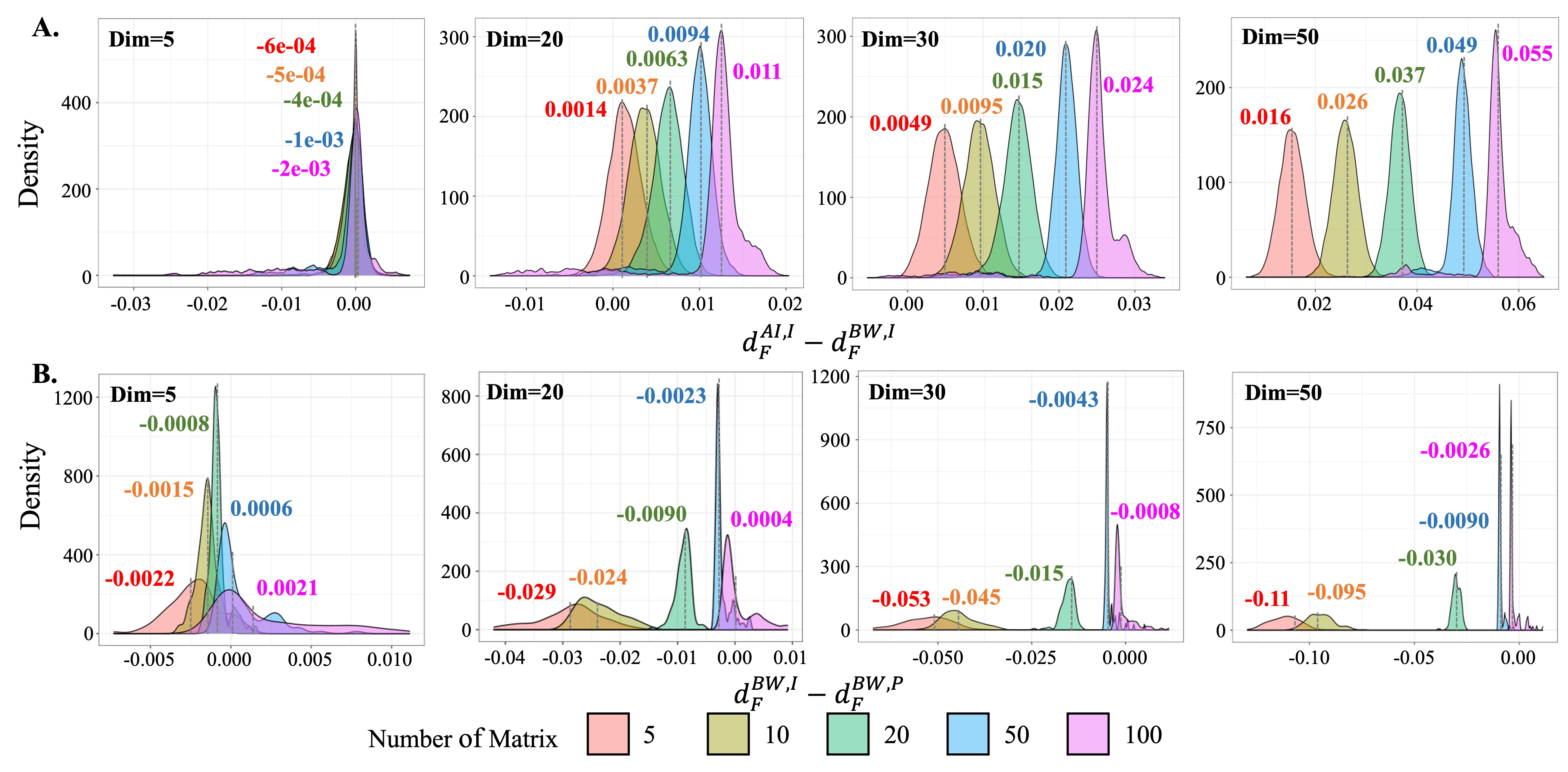}
  \caption{Comparison of robustness of Barycenter estimation for more than two matrices. (A) Distribution of $d_{F}^{AI,I}-d_{F}^{BW,I}$ with different $n$ and $m$. (B) Distribution of $d_{F}^{BW,I}-d_{F}^{BW,P}$ with different $n$ and $m$.}
  \label{fig:robustness_more than two}
\end{figure*}

To compare the robustness between two algorithms,the following metrics are displayed in Figure \ref{fig:robustness_more than two}.
$$ d_F^{AI,I}-d_F^{BW,I} := \{d_{F,ij}^{AI,I} - d_{F,ij}^{BW,I},i,j=1,...,100\} $$
$$ d_F^{BW,I}-d_F^{BW,P} := \{d_{F,ij}^{BW,I} - d_{F,ij}^{BW,P},i,j=1,...,100\} $$

%where $RI, BI, BP$ refers to Riemannian Inductive mean, BW Inductive mean, and BW projection mean algorithms respectively.

Figure \ref{fig:robustness_more than two} (A) displays the differences in the robustness of inductive mean algorithms coupled with AI and BW distances. As $n$ or $m$ increases, BW inductive mean algorithm becomes significantly more robust than AI mean (all p-values $< 0.05$ for Kruskal-Wallis test and Pairwise Wilcoxon Rank Sum test).

Figure \ref{fig:robustness_more than two} (B) compares the BW inductive and BW projection mean algorithms. The BW projection mean algorithm is more robust than BW inductive algorithm when $m$ is much larger than $n$. For instance, when there are less than 20 $5\times5$ matrices, BW inductive is more robust but when there are 50 or 100 matrices, BW projection algorithm becomes significantly more robust (p-values $< 0.05$ for both Kruskal-Wallis test and Pairwise Wilcoxon Rank Sum test).

In summary, BW projection mean algorithm is the most robust when $m$ is much larger than $n$. BW inductive mean algorithm is also robust compared to AI inductive mean algorithm.

\section{Conclusion} \label{conclusion}
In this paper, we first establish the mathematical foundation for the BW distance by studying the properties of BW distance and the retraction maps of the manifold $\overline{\P}_n$. To characterize the distribution of PSD matrices on the manifold, we propose three algorithms to estimate the Fr\'echet mean (i.e., barycenter) of a set of PSD matrices. With extensive simulation experiments, we comprehensively investigate three aspects: 1. the robustness of BW distance when using it to quantify the distance between PSD matrices, 2. the robustness of BW barycenter for two matrices if the two matrices are contaminated by small perturbations, 3. the accuracy, efficiency, and robustness of the proposed three barycenter estimation algorithms. Compared with AI distance, BW distance is more robust especially when matrices are close to being positive semi-definite, which is a common scenario for high dimensional data. When there are only two matrices, BW barycenter is more robust than AI barycenter when matrices are affected by noises, especially for positive definite matrices with some small eigenvalues. When there are more than two matrices, BW projection mean algorithm outperforms others in terms of accuracy, efficiency, and robustness. Therefore, BW distance and projection mean algorithm are recommended especially for studying high dimensional matrices.

% if have a single appendix:
%\appendix[Proof of the Zonklar Equations]
% or
%\appendix  % for no appendix heading
% do not use \section anymore after \appendix, only \section*
% is possibly needed

% use appendices with more than one appendix
% then use \section to start each appendix
% you must declare a \section before using any
% \subsection or using \label (\appendices by itself
% starts a section numbered zero.)
%

\appendices
%\section{Proof of the First Zonklar Equation}
%Appendix one text goes here.

% you can choose not to have a title for an appendix
% if you want by leaving the argument blank
%\section{}
%Appendix two text goes here.

% use section* for acknowledgment
\section*{Acknowledgment}

The authors would like to thank anonymous referees, an Associate Editor, and the Editor for their constructive comments that improved the quality of this paper.
This paper is based upon work supported by the National Science Foundation under Grant No. 2153492.

% Can use something like this to put references on a page
% by themselves when using endfloat and the captionsoff option.
\ifCLASSOPTIONcaptionsoff
  \newpage
\fi

% trigger a \newpage just before the given reference
% number - used to balance the columns on the last page
% adjust value as needed - may need to be readjusted if
% the document is modified later
%\IEEEtriggeratref{8}
% The "triggered" command can be changed if desired:
%\IEEEtriggercmd{\enlargethispage{-5in}}

% references section

% can use a bibliography generated by BibTeX as a .bbl file
% BibTeX documentation can be easily obtained at:
% http://mirror.ctan.org/biblio/bibtex/contrib/doc/
% The IEEEtran BibTeX style support page is at:
% http://www.michaelshell.org/tex/ieeetran/bibtex/
%\bibliographystyle{IEEEtran}
% argument is your BibTeX string definitions and bibliography database(s)
%\bibliography{IEEEabrv,../bib/paper}
%
% <OR> manually copy in the resultant .bbl file
% set second argument of \begin to the number of references
% (used to reserve space for the reference number labels box)

\bibliography{references}
\bibliographystyle{IEEEtran}

\end{document}